\newif\iflncs
\newcommand*\patchAmsMathEnvironmentForLineno[1]{%
  \expandafter\let\csname old#1\expandafter\endcsname\csname #1\endcsname
  \expandafter\let\csname oldend#1\expandafter\endcsname\csname end#1\endcsname
  \renewenvironment{#1}%
     {\linenomath\csname old#1\endcsname}%
     {\csname oldend#1\endcsname\endlinenomath}}% 
\newcommand*\patchBothAmsMathEnvironmentsForLineno[1]{%
  \patchAmsMathEnvironmentForLineno{#1}%
  \patchAmsMathEnvironmentForLineno{#1*}}%
 \let\qedhere\qed
 \newtheorem{theorem}{Theorem}
 \newtheorem{lemma}[theorem]{Lemma}
\newcommand{\RR}{\ensuremath{\mathbb R}}  % real numbers
\newcommand{\ZZ}{\ensuremath{\mathbb Z}}  % integer numbers
\newcommand{\WW}{\ensuremath{\mathbb W}}  % set of windows
\newcommand{\II}{\ensuremath{\mathbb I}}  % set of intervals
\newcommand{\JJ}{\ensuremath{\mathbb J}}  % set of intervals
\renewcommand{\SS}{\ensuremath{\mathbb S}}  % set of segments 
\renewcommand{\H}{\ensuremath{\mathcal{H}}}  % family of hash functions
\def\Var{\mathop{\rm Var}} 
\def\leftmost{\mathit{Leftmost}}
\def\rightmost{\mathit{Rightmost}}
\def\Active{\mathit{active}}
\def\Index{{\sc Index}}
\newcommand\eps{\varepsilon}
\def\DEF#1{\textbf{\emph{#1}}}
\newcommand{\out}[1]{}
\title{Interval Selection in the Streaming Model}
\authorrunning{Cabello and P\'erez-Lantero}
\titlerunning{Interval Selection in the Streaming Model}
\author{
  		Sergio Cabello\inst{1}\thanks{
        	Supported by the Slovenian Research Agency, program P1-0297, projects
            J1-4106 and L7-5459; by the ESF EuroGIGA project (project GReGAS) of
            the European Science Foundation. 
            Part of the work was done while visiting Universidad de
            Valpara\'\i so, Chile.
        }
	\and
  		Pablo P\'erez-Lantero\inst{2}\thanks{
        	Supported by project Millennium Nucleus
        	Information and Coordination in Networks ICM/FIC P10-024F (Chile).
        }
}
\institute{
		Department of Mathematics, IMFM, and
        Department of Mathematics, FMF, University of Ljubljana, Slovenia.\\
        \email{sergio.cabello@fmf.uni-lj.si}
	\and  
		Escuela de Ingenier\'ia Civil en Inform\'atica, Universidad de
		Valpara\'{i}so, Chile.\\
        \email{pablo.perez@uv.cl}
}
\author{
  		Sergio Cabello\thanks{Department of Mathematics, IMFM, and
			Department of Mathematics, FMF, University of Ljubljana, Slovenia.
			\texttt{sergio.cabello@fmf.uni-lj.si}
        	Supported by the Slovenian Research Agency, program P1-0297, projects
            J1-4106 and L7-5459; by the ESF EuroGIGA project (project GReGAS) of
            the European Science Foundation. 
            Part of the work was done while visiting Universidad de
            Valpara\'\i so, Chile.}
	\and
  		Pablo P\'erez-Lantero\thanks{Escuela de Ingenier\'ia Civil 
			en Inform\'atica, Universidad de Valpara\'{i}so, Chile.
			\texttt{pablo.perez@uv.cl}
        	Supported by project Millennium Nucleus
        	Information and Coordination in Networks ICM/FIC RC130003 (Chile).
        }
}
\begin{document}
\maketitle

\begin{abstract}
	A set of intervals is independent when the intervals are pairwise disjoint.
	In the interval selection problem we are given a set $\II$ of intervals
	and we want to find an independent subset of intervals of largest cardinality.
	Let $\alpha(\II)$ denote the cardinality of an optimal solution.
	We discuss the estimation of $\alpha(\II)$ in the streaming model, 
	where we only have one-time, sequential access to the input intervals,
	the endpoints of the intervals lie in $\{ 1,\dots ,n \}$,
	and the amount of the memory is constrained. 

	For intervals of different sizes,
	we provide an algorithm in the data stream model
	that computes an estimate $\hat\alpha$ of $\alpha(\II)$ that,
	with probability at least $2/3$,
	satisfies $\tfrac 12(1-\eps) \alpha(\II) \le \hat\alpha \le \alpha(\II)$.
	For same-length intervals,
	we provide another algorithm in the data stream model 
	that computes an estimate $\hat\alpha$ of $\alpha(\II)$ that,
	with probability at least $2/3$,
	satisfies $\tfrac 23(1-\eps) \alpha(\II) \le \hat\alpha \le \alpha(\II)$.
	The space used by our algorithms is bounded by a polynomial in $\eps^{-1}$ and $\log n$.
	We also show that no better estimations can be achieved using $o(n)$ bits
	of storage.
	
	We also develop new, approximate solutions to the interval selection problem,
    where we want to report a feasible solution, that use $O(\alpha(\II))$ space. 
	Our algorithms for the interval selection problem match the optimal results by 
	Emek, Halld{\'o}rsson and Ros{\'e}n [Space-Constrained Interval Selection, ICALP 2012],
	but are much simpler.	
\end{abstract}

%%%%%%%%%%%%%%%%%%%%%%%%%%%%%%%%%%%%%%%%%%%%%%%%%%%%%%%%%%%%%%%%%%%%%%%%%%%%%%%%%%%%%%%%%%%%%%%%%%%%%%%%%%%%%%%%%%%%%%
\section{Introduction}

Several fundamental problems have been explored in the data streaming model; 
see~\cite{chakrabartics11,muthu-book} for an overview. In this model 
we have bounds on the amount of available memory, the data arrives 
sequentially, and we cannot afford to look at input data of the past, unless
it was stored in our limited memory. This is effectively equivalent to assuming
that we can only make one pass over the input data.

In this paper, we consider the interval selection problem. 
Let us say that a set of intervals is \DEF{independent} when all the intervals
are pairwise disjoint. 
In the \DEF{interval selection problem}, the input is a set $\II$ of 
intervals and we want to find an independent subset of largest cardinality.
Let us denote by $\alpha(\II)$ this largest cardinality. 
There are actually two different problems: one problem is finding (or approximating) 
a largest independent subset, while the other problem is estimating $\alpha(\II)$.
In this paper we consider both problems in the data streaming model.

There are many natural reasons to consider the interval selection problem in
the data streaming model. Firstly, the interval selection problem appears
in many different contexts and several extensions have been studied; see 
for example the survey~\cite{survey}.

Secondly, the interval selection problem is a natural
generalization of the distinct elements problem: given a data stream of numbers,
identify how many distinct numbers appeared in the stream. The distinct elements
problem has a long tradition in data streams; 
see Kane, Nelson and Woodruff~\cite{optimal} for an optimal algorithm and
references therein for a historical perspective.

Thirdly, there has been interest in understanding graph problems in the data stream
model. However, several problems cannot be solved within the memory constraints
usually considered in the data stream model. 
This leads to the introduction by Feigenbaum et al.~\cite{semi} of the semi-streaming
model, where the available memory is $O(|V|\log^{O(1)} |V|)$, being $V$
the vertex set of the corresponding graph.
Another model closely
related to preemptive online algorithms was considered by 
Halld{\'o}rsson et al.~\cite{indepset}: there is an output buffer where 
a feasible solution is always maintained.

Finally, geometrically-defined graphs provide a rich family of graphs where certain graph
problems may be solved within the traditional model.
We advocate that graph problems should be considered for geometrically-defined
graphs in the data stream model.
The interval selection problem is one such case, since it is exactly 
finding a largest independent set in the intersection graph of the input intervals.

\paragraph{Previous works.}
Emek, Halld{\'o}rsson and Ros{\'e}n~\cite{EmekHR12} consider the interval
selection problem with $O(\alpha(\II))$ space. They provide a $2$-approximation algorithm
for the case of arbitrary intervals and a $(3/2)$-approximation for the case
of proper intervals, that is, when no interval contains another interval. 
Most importantly, they show that no better
approximation factor can be achieved with sublinear space. 
Since any $O(1)$-approximation obviously requires $\Omega(\alpha(\II))$ space, 
their algorithms are optimal. 
They do not consider the problem of estimating $\alpha(\II)$.
Halld{\'o}rsson et al.~\cite{indepset} consider maximum independent set in
the aforementioned online streaming model.
As mentioned before, estimating $\alpha(\II)$ is a generalization of
the distinct elements problems. See Kane, Nelson and Woodruff~\cite{optimal} and
references therein.

\paragraph{Our contributions.}
We consider both the estimation of $\alpha(\II)$ and the interval selection problem, 
where a feasible solution must be produced, 
in the data streaming model. We next summarize our results and put them in context.

\begin{itemize}
	\item[(a)] We provide a $2$-approximation algorithm for the interval selection problem
    	using $O(\alpha(\II))$ space.
    	Our algorithm has the same space bounds and approximation factor
        than the algorithm by Emek, Halld{\'o}rsson and Ros{\'e}n~\cite{EmekHR12},
        and thus is also optimal.
        However, our algorithm is considerably easier to explain,
        analyze and understand.
        Actually, the analysis of our algorithm is nearly trivial. 
        This result is explained in Section~\ref{sec:lis}.
    \item[(b)] We provide an algorithm to obtain a value $\hat\alpha(\II)$ such that 
    	$\tfrac 12 (1-\eps)\alpha(\II)\le \hat\alpha(\II) \le \alpha(\II)$
        with probability at least $2/3$. 
        The algorithm uses $O(\eps^{-5}\log^6 n)$ space for intervals with 
        endpoints in $\{1,\ldots,n\}$. 
        As a black-box subroutine we use a $2$-approximation algorithm 
        for the interval selection problem.
        This result is explained in Section~\ref{sec:slis}.
    \item[(c)] For \emph{same-length} intervals we provide a $(3/2)$-approximation algorithm 
    	for the interval selection problem using $O(\alpha(\II))$ space.
    	Again, Emek, Halld{\'o}rsson and Ros{\'e}n~\cite{EmekHR12}
		provide an algorithm with the same guarantees and give a lower bound showing that
		the algorithm is optimal. We believe that our algorithm is simpler, 
        but this case is more disputable. 
        This result is explained in Section~\ref{sec:lisu}.
    \item[(d)] For \emph{same-length} intervals with endpoints in $\{1,\ldots,n\}$,
    	we show how to find in $O(\eps^{-2}\log(1/\eps)+\log n)$ space an 
		estimate $\hat\alpha(\II)$ such that 
        $\tfrac 23 (1-\eps)\alpha(\II)\le \hat\alpha(\II) \le \alpha(\II)$ 
        with probability at least $2/3$. This algorithm is an adaptation of the new
        algorithm in (c).
        This result is explained in Section~\ref{sec:slisu}.
    \item[(e)] We provide lower bounds showing that the approximation ratios 
    	in (b) and (d) are essentially optimal, if we use $o(n)$ space. 
        Note that the lower bounds of 
        Emek, Halld{\'o}rsson and Ros{\'e}n~\cite{EmekHR12} hold for the interval
        selection problem but not for the estimation of $\alpha(\II)$. 
        We employ a reduction from the one-way randomized communication complexity of \Index.
        Details appear in Section~\ref{sec:lower}.
\end{itemize}

The results in (a) and (c) work in a comparison-based model and we assume that a unit of memory can store an interval.
The results in (b) and (d) are based on hash functions and 
we assume that a unit of memory can store values in $\{1,\dots ,n\}$.
Assuming that the input data, in our case the endpoints of the intervals,  
is from $\{1,\dots, n\}$ is common in the data streaming model.
The lower bounds of (e) are stated at bit level.

It is important to note that estimating $\alpha(\II)$ requires considerably less space
than computing an actual feasible solution with $\Theta(\alpha(\II))$ intervals.
While our results in (a) and (c) are a simplification of the work of Emek et al.,
the results in (b) and (d) were unknown before. 

As usual, the probability of success can be increased to $1-\delta$ using
$O(\log(1/\delta))$ parallel repetitions of the algorithm and choosing
the median of values computed in each repetition.

%%%%%%%%%%%%%%%%%%%%%%%%%%%%%%%%%%%%%%%%%%%%%%%%%%%%%%%%%%%%%%%%%%%%%%
\section{Preliminaries}
We assume that the input intervals are closed. Our algorithms can be easily
adapted to handle inputs that contain intervals of mixed types:
some open, some closed, and some half-open.

We will use the term `interval' only for the input intervals.
We will use the term `window' for intervals constructed through
the algorithm and `segment' for intervals associated
with the nodes of a segment tree. (This segment tree is explained later on.) 
The windows we consider may be of any type regarding the inclusion of endpoints.

For each natural number $n$, we let $[n]$ be the integer range $\{ 1,\dots,n\}$.
We assume that $0< \eps<1/2$.

\subsection{Leftmost and rightmost interval}
Consider a window $W$ and a set of intervals $\II$. 
We associate to $W$ two input intervals.
\begin{itemize}
	\item The interval $\leftmost(W)$ is, among the intervals of $\II$ contained in $W$, 
		the one with smallest right endpoint. If there are many candidates 
        with the same right endpoint, $\leftmost(W)$ is one with largest 
        left endpoint.
	\item The interval $\rightmost(W)$ is, among the intervals of $\II$ contained in $W$,
		the one with largest left endpoint. If there are many candidates with the
        same left endpoint, $\rightmost(W)$ is one with smallest right endpoint.
\end{itemize}

When $W$ does not contain any interval of $\II$, then $\leftmost(W)$ and $\rightmost(W)$
are undefined.
When $W$ contains a unique interval $I\in \II$, we have $\leftmost(W)=\rightmost(W)=I$.
Note that the intersection of all intervals contained in $W$ is precisely 
$\leftmost(W)\cap \rightmost(W)$.

In fact, we will consider $\leftmost(W)$ and $\rightmost(W)$ with respect to the portion
of the stream that has been treated. 
We relax the notation by omitting the reference to $\II$ or the portion of the stream
we have processed. It will be clear from the context
with respect to which set of intervals we are considering $\leftmost(W)$ and $\rightmost(W)$.

\subsection{Sampling}

We next describe a tool for sampling elements from a stream.
A family of permutations $\H = \{ h: [n]\rightarrow [n] \}$ is 
\DEF{$\eps$-min-wise independent} if 
\[
	\forall X\subset [n] \text{ and } \forall y\in X:
		~~~~~\frac{1-\eps}{|X|}~\le~ \Pr_{h\in \H} \left[ h(y)= \min h(X)\right]~ \le~ \frac{1+\eps}{|X|}.
\]
Here, $h\in \H$ is chosen uniformly at random.
The family of \emph{all} permutations is $0$-min-wise independent. 
However, there is no compact way to specify an arbitrary permutation.
As discussed by Broder, Charikar and Mitzenmacher~\cite{bcm-03}, the results
of Indyk~\cite{indyk-01} can be used to construct a compact, computable 
family of permutations that is $\eps$-min-wise independent.
See \cite{bcfm-00,cdf-01,dm-02} for other uses of $\eps$-min-wise independent permutations.

\begin{lemma}
\label{le:permutations}
	For every $\eps\in(0,1/2)$ and $n>0$
	there exists a family of permutations $\H (n,\eps)= \{ h: [n]\rightarrow [n] \}$
	with the following properties:
	(i) $\H (n,\eps)$ has $n^{O(\log(1/\eps))}$ permutations;
	(ii) $\H (n,\eps)$ is $\eps$-min-wise independent;
	(iii) an element of $\H (n,\eps)$ can be chosen uniformly at random in $O(\log(1/\eps))$ time;
	(iv) for $h\in \H(n,\eps)$ and $x,y\in [n]$, we can decide with 
		$O(\log(1/\eps))$ arithmetic operations whether $h(x)<h(y)$.
\end{lemma}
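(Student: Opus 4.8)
The plan is to invoke Indyk's construction of almost-$k$-wise independent hash families in a black-box fashion and show that, for an appropriate constant $k$ depending only on $\log(1/\eps)$, these hash functions, after a standard transformation into permutations, give the desired $\eps$-min-wise independent family. First I would recall that Indyk~\cite{indyk-01}, building on the limited-independence derandomization of min-wise hashing, shows that if $\H'$ is a family of functions $[n]\to[n]$ that is $\eps$-approximately $k$-wise independent for $k=\Theta(\log(1/\eps))$, then $\H'$ is $\eps$-min-wise independent in the sense required here; this is exactly the statement made precise by Broder, Charikar and Mitzenmacher~\cite{bcm-03}. Such a family can be described using $O(k\log n)=O(\log(1/\eps)\log n)$ random bits, which immediately yields $|\H(n,\eps)| = 2^{O(\log(1/\eps)\log n)} = n^{O(\log(1/\eps))}$, giving property~(i). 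Property~(iii) follows because choosing a member of the family amounts to drawing $O(\log(1/\eps))$ field elements uniformly at random (one can work over a field of size $\Theta(n)$), which costs $O(\log(1/\eps))$ arithmetic operations.

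The only genuine subtlety is that Indyk's construction naturally produces \emph{functions}, not \emph{permutations}, whereas the definition of $\eps$-min-wise independence in the lemma is phrased for a family of permutations $h\colon[n]\to[n]$. I would handle this in the standard way: work over a slightly larger ground set, say $[m]$ with $m = n^c$ for a small constant $c$, so that an $\eps$-approximately $k$-wise independent family of functions $[n]\to[m]$ is, with the remaining probability mass, injective on any fixed set $X$ of the sizes that matter; then break ties by a fixed secondary order (e.g. by the original label), which turns the function into a permutation-like ordering without affecting the min-wise guarantee except by a further $(1\pm O(\eps))$ factor. Equivalently, one composes with a random permutation from a small almost-wise-independent permutation family, or simply observes that for the purposes of property~(ii) and the applications in this paper only the induced \emph{ranking} of elements matters, so one may define $h$ to be the permutation of $[n]$ induced by sorting according to the hash values (ties broken by index). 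After rescaling $\eps$ by the appropriate constant, this yields property~(ii). Property~(iv) is then immediate: to compare $h(x)$ and $h(y)$ we evaluate the underlying degree-$O(k)$ polynomial (or the $O(k)$ inner products defining the almost-wise-independent function) at $x$ and at $y$, which is $O(\log(1/\eps))$ arithmetic operations, and compare the results, using the index tie-break if they coincide.

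The main obstacle I anticipate is bookkeeping the $\eps$-dependence carefully: the transformation from ``$\eps'$-approximately $k$-wise independent function family'' to ``$\eps$-min-wise independent permutation family'' loses constant factors both in the approximation parameter and inside the exponent of $k=\Theta(\log(1/\eps'))$, and one must also absorb the $O(\eps)$ error introduced by the possibility of hash collisions on $X$ when $|X|$ is large. None of these is deep, but to keep the write-up honest I would state the intermediate claim (the reduction from min-wise to bounded independence) as a cited fact from~\cite{indyk-01,bcm-03}, then only verify that plugging $k=c\log(1/\eps)$ for a suitable absolute constant $c$ into their bound gives $\Pr_h[h(y)=\min h(X)] \in \big[\tfrac{1-\eps}{|X|},\tfrac{1+\eps}{|X|}\big]$ for all $X$ and all $y\in X$. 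The remaining items~(i), (iii) and~(iv) then follow by direct inspection of the representation size and evaluation cost of the construction, completing the proof.
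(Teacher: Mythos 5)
Your proposal follows essentially the same route as the paper's proof: cite Indyk's $O(\log(1/\eps))$-wise-independence construction, enlarge the range so hash collisions contribute only an $O(\eps)$ error, convert the resulting functions into permutations of $[n]$ by sorting hash values with ties broken by index, and observe that property (iv) reduces to evaluating and comparing the underlying low-degree polynomials rather than computing the permutation explicitly. The only substantive difference is bookkeeping: the paper sets the range to $m=c_1 n/\eps$ precisely so that every $X\subset[n]$ meets the size precondition $|X|\le \eps m/c_1$ of Indyk's theorem and the collision term $1/m$ is dominated by $\eps/(|X|+1)$, whereas your choice $m=n^c$ is justified only informally by injectivity and would need the same verification.
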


\begin{proof}
	Indyk~\cite{indyk-01} showed that there exist constants $c_1,c_2>1$ such that,
	for any $\eps>0$ and any family $\H'$ 
	of $c_2\log(1/\eps)$-wise 	
	independent hash functions $[m]\rightarrow [m]$, it holds the following:
	\begin{align*}
	\forall X\subset [m]\text{ with } |X|\leq \frac{\eps m}{c_1} 
		\text{ and } & \forall y\in [m]\setminus X:\\
		&~~ \frac{1-\eps}{|X|+1} ~\le~ \Pr_{h'\in \H'} [h'(y)< \min h'(X)] ~\le~ \frac{1+\eps}{|X|+1}.
	\end{align*}

	Set $m=c_1 n/\eps>n$ and let 
	$\H'=\{ h'\colon [m]\rightarrow [m]\}$ be a family
	of $c_2\log(1/\eps)$-wise independent hash functions.
	Since $n= \eps m/c_1$, the result of Indyk implies that
	\[
	 \forall X\subset [n]\text{ and } \forall y \in [n]\setminus X:
		~~~~\frac{1-\eps}{|X|+1} ~\le~ \Pr_{h'\in \H'} [h'(y)< \min h'(X)] ~\le~ \frac{1+\eps}{|X|+1}.
	\]
	Each hash function $h'\in \H'$ can be used to create a permutation 
    $\widehat{h'}:[n]\rightarrow [n]$:
	define $\widehat{h'}(i)$ as the position of $(h'(i),i)$ in the lexicographic order of 
	$\{ (h'(i),i) \mid  i\in [n] \}$. 
	Consider the set of permutations 
	$\widehat{\H'}=\{ \widehat{h'}:[n]\rightarrow [n] \mid h'\in \H'\}$.
    For each $X\subset [n]$ and $y \in [n]\setminus X$
    we have
    \begin{align*}
    	\frac{1-\eps}{|X|+1} ~\le ~\Pr_{h'\in \H'} \left[h'(y)< \min h'(X)\right]
        ~\le \Pr_{\widehat{h'}\in \widehat{\H'}} \left[\widehat{h'}(y)< \min \widehat{h'}(X)\right] 
    \end{align*}
    and
    \begin{align*}
    	\Pr_{\widehat{h'}\in \widehat{\H'}} \left[\widehat{h'}(y)< \min \widehat{h'}(X)\right] 
        ~&\le ~\Pr_{h'\in \H'} \left[h'(y)\le \min h'(X)\right] \\
        ~&\le ~\Pr_{h'\in \H'} \left[h'(y)< \min h'(X)\right] + \Pr_{h'\in \H'}\left[ h'(y)= \min h'(X)\right]\\
        ~&\le~ \frac{1+\eps}{|X|+1} + \frac{1}{m}~\le~
        \frac{1+2\eps}{|X|+1},
    \end{align*}
    where we have used that $h'(y)= \min h'(X)$ corresponds to a collision
    and $m> n/\eps$.
	We can rewrite this as
	\[
	 \forall X\subset [n]\text{ and } \forall y \in X:
		~~~~\frac{1-\eps}{|X|} ~\le~ \Pr_{\widehat{h'}\in \widehat{\H'}} \left[ \widehat{h'}(y)= \min \widehat{h'}(X)\right] ~\le~ \frac{1+2\eps}{|X|}.
	\]
    Using $\eps/2$ instead of $\eps$ in the discussion, the 
    lower bound becomes $(1-\eps/2)/|X|\ge (1-\eps)/|X|$ and the upper bound
    becomes $(1+\eps)/|X|$, as desired.
	Standard constructions using polynomials over finite fields can be used
	to construct a family $\H'=\{ h'\colon [m]\rightarrow [m]\}$ 
	of $c_2\log(1/\eps)$-wise independent hash functions such that:
	$\H'$ has $n^{O(\log(1/\eps))}$ hash functions;
	an element of $\H'$ can be chosen uniformly at random in $O(\log(1/\eps))$ time;
	for $h'\in \H'$ and $x\in [n]$ we can compute $h'(x)$ using
	$O(\log(1/\eps))$ arithmetic operations.

	This gives an implicit description of our desired set of permutations $\widehat{\H'}$
	satisfying (i)-(iii). 
	Moreover, while computing $\widehat{h'}(x)$ for $h'\in \H'$ is demanding,
	we can easily decide whether $\widehat{h'}(x)< \widehat{h'}(y)$ by computing
	and comparing $(h'(x),x)$ and $(h'(y),y)$.\qedhere	
\end{proof}

Let us explain now how to use Lemma~\ref{le:permutations} to make a (nearly-uniform) random sample. We learned this idea from Datar and Muthukrishnan~\cite{dm-02}.
Consider any fixed subset $X\subset [n]$ and let $\H=\H(n,\eps)$ be the family of 
permutations given in Lemma~\ref{le:permutations}.
An \DEF{$\H$-random} element $s$ of $X$ is obtained
by choosing a hash function $h\in \H$ uniformly at random,
and setting $s=\arg\min \{ h(t)\mid t\in X\}$. 
It is important to note that $s$ is not chosen uniformly at random from $X$.
However, from the definition of $\eps$-min-wise independence we have
\[
	\forall x\in X:~~~~ \frac{1-\eps}{|X|} ~\le~ \Pr [s=x]~\le~ \frac{1+\eps}{|X|}.
\]
In particular, we obtain the following 
\[
	\forall Y\subset X:~~~~\frac{(1-\eps)|Y|}{|X|} ~\le~ \Pr [s\in Y]~\le~ \frac{(1+\eps)|Y|}{|X|}.
\]
This means that, for a fixed $Y$, 
we can estimate the ratio $|Y|/|X|$ using $\H$-random samples from $X$ repeatedly,
and counting how many belong to $Y$.

Using $\H$-random samples has two advantages for data streams with elements from $[n]$. 
Through the stream, we can maintain an $\H$-random sample $s$ of the elements
seen so far. For this, we select $h\in\H$ uniformly at random,
and, for each new element $a$ of the stream, we check whether $h(a)<h(s)$ and update $s$, if needed.
An important feature of sampling in such way is that $s$ is 
almost uniformly at random among those
appearing in the stream, \emph{without} counting multiplicities.
The other important feature is that we select $s$ at its first appearance in the stream. 
Thus, we can carry out any computation that depends on $s$ 
and on the portion of the stream after its first appearance. 
For example, we can count how many times the $\H$-random element $s$ 
appears in the whole data stream.

We will also use $\H$ to make conditional sampling: we select $\H$-random samples
until we get one satisfying a certain property. To analyze such technique, the following
result will be useful.

\begin{lemma}
\label{le:conditional}
	Let $Y\subset X\subset [n]$ and assume that $0<\eps<1/2$.
    Consider the family of permutations $\H=\H(n,\eps)$ from Lemma~\ref{le:permutations},
    and take a $\H$-random sample $s$ from $X$. Then
    \[
    	\forall y\in Y:~~~~ \frac{1-4\eps}{|Y|} ~\le~ \Pr [s=y\mid s\in Y]~\le~ \frac{1+4\eps}{|Y|}.
    \]
\end{lemma}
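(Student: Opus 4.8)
The plan is to express the conditional probability $\Pr[s=y \mid s\in Y]$ as the ratio $\Pr[s=y]/\Pr[s\in Y]$ and then bound the numerator and denominator separately using the $\eps$-min-wise independence guarantee from Lemma~\ref{le:permutations}. Since $y\in Y$ implies $\{s=y\}\subseteq\{s\in Y\}$, we have the clean identity
\[
	\Pr[s=y\mid s\in Y] ~=~ \frac{\Pr[s=y]}{\Pr[s\in Y]}.
\]
For the numerator, the definition of $\H$-random sample (via $\eps$-min-wise independence applied to the set $X$) gives $\frac{1-\eps}{|X|}\le \Pr[s=y]\le \frac{1+\eps}{|X|}$. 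For the denominator, summing the same bound over all elements of $Y$ gives $\frac{(1-\eps)|Y|}{|X|}\le \Pr[s\in Y]\le \frac{(1+\eps)|Y|}{|X|}$, exactly as already recorded in the discussion preceding the lemma.

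First I would combine these: dividing the upper bound on the numerator by the lower bound on the denominator yields
\[
	\Pr[s=y\mid s\in Y] ~\le~ \frac{(1+\eps)/|X|}{(1-\eps)|Y|/|X|} ~=~ \frac{1}{|Y|}\cdot\frac{1+\eps}{1-\eps},
\]
and symmetrically the lower bound is $\frac{1}{|Y|}\cdot\frac{1-\eps}{1+\eps}$. So the whole problem reduces to the elementary inequalities $\frac{1+\eps}{1-\eps}\le 1+4\eps$ and $\frac{1-\eps}{1+\eps}\ge 1-4\eps$ for $0<\eps<1/2$. These follow from simple algebra: for the upper one, $\frac{1+\eps}{1-\eps} = 1 + \frac{2\eps}{1-\eps} \le 1 + \frac{2\eps}{1/2} = 1+4\eps$ since $1-\eps>1/2$; for the lower one, $\frac{1-\eps}{1+\eps} = 1 - \frac{2\eps}{1+\eps} \ge 1 - 2\eps \ge 1-4\eps$.

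There is essentially no hard part here — the lemma is a routine "conditioning loses only a constant factor in the $\eps$" argument, and the only thing to be slightly careful about is tracking that the constant degrades from $\eps$ to $4\eps$ (rather than, say, $2\eps$ or $3\eps$), which is why I would keep the crude bound $1-\eps>1/2$ in hand and apply it uniformly. If a tighter constant were wanted one could optimize, but $4\eps$ suffices for all downstream applications and keeps the statement clean. I would present the numerator bound, the denominator bound, the division, and the two scalar inequalities in that order, and then conclude.
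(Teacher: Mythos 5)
Your proposal is correct and follows essentially the same route as the paper's proof: both express the conditional probability as $\Pr[s=y]/\Pr[s\in Y]$, bound numerator and denominator via the $\eps$-min-wise independence guarantee, and reduce to the scalar inequalities $\frac{1+\eps}{1-\eps}\le 1+4\eps$ and $\frac{1-\eps}{1+\eps}\ge 1-4\eps$ for $\eps<1/2$. Your write-up merely spells out those last two inequalities in slightly more detail than the paper does.
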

\begin{proof}
	Consider any $y\in Y$. Since $s=y$ implies $s\in Y$, we have
    \begin{align*}
    	\Pr [s=y\mid s\in Y] ~=~ \frac{\Pr[s=y]}{\Pr[s\in Y]}
        ~\le~ \frac{\frac{1+\eps}{|X|}}{\frac{(1-\eps)|Y|}{|X|}} 
        ~=~ \frac{1+\eps}{1-\eps}\cdot \frac{1}{|Y|} ~\le~ (1+4\eps) \frac{1}{|Y|},
    \end{align*}
    where in the last inequality we used that $\eps<1/2$.
    Similarly, we have
    \begin{align*}
    	\Pr [s=y\mid s\in Y] ~\ge~ \frac{\frac{1-\eps}{|X|}}{\frac{(1+\eps)|Y|}{|X|}} 
        ~=~ \frac{1-\eps}{1+\eps}\cdot \frac{1}{|Y|} ~\ge~ (1-4\eps) \frac{1}{|Y|},
    \end{align*}
which completes the proof\qedhere
\end{proof}

%%%%%%%%%%%%%%%%%%%%%%%%%%%%%%%%%%%%%%%%%%%%%%%%%%%%%%%%%%%%%%%%%%%%%%%%%%%%%%%%%%%%%%%%%%%%%%%
\section{Largest independent subset of intervals}
\label{sec:lis}

In this section we show how to obtain a $2$-approximation to the largest independent
subset of $\II$ using $O(\alpha(\II))$ space.

A set $\WW$ of windows is a \DEF{partition of the real line} if
the windows in $\WW$ are pairwise disjoint and their union is the whole $\RR$.
The windows in $\WW$ may be of different types regarding the inclusion of endpoints.
See Figure~\ref{fig:partition} for an example.

\begin{lemma}
\label{le:2approx}
	Let $\II$ be a set of intervals and
	let $\WW$ be a partition of the real line with the following properties:
	\begin{itemize}
		\item Each window of $\WW$ contains at least one interval from $\II$.
		\item For each window $W\in \WW$, the intervals of $\II$ contained
			in $W$ pairwise intersect.
	\end{itemize}
	Let $\JJ$ be any set of intervals constructed 
	by selecting for each window $W$ of $\WW$
	an interval of $\II$ contained in $W$.
	Then $|\JJ| > \alpha(\II)/2$.
\end{lemma}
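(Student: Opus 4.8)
The plan is to fix a maximum independent set $\mathcal{O}\subseteq\II$, so that $|\mathcal{O}|=\alpha(\II)$, and to charge each interval of $\mathcal{O}$ to a window of $\WW$ in such a way that no window is charged more than twice and at least one window is charged at most once. Since an interval of $\II$ is contained in at most one window (the windows of $\WW$ are pairwise disjoint), distinct windows select distinct intervals for $\JJ$, so $|\JJ|=|\WW|$; the same injectivity shows $|\WW|$ is finite, and $\WW\neq\emptyset$ because it partitions $\RR$. Granting the charging claim, $\alpha(\II)=|\mathcal{O}|\le 2|\WW|-1<2|\WW|=2|\JJ|$, which is exactly the statement.

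For the charging, I would send each $I\in\mathcal{O}$ to the unique window $\phi(I)\in\WW$ containing the left endpoint $\ell_I$ of $I$. The heart of the argument is that $|\phi^{-1}(W)|\le 2$ for every $W\in\WW$. Suppose instead that $I_1,I_2,I_3\in\mathcal{O}$ all have their left endpoint in $W$; since $\mathcal{O}$ is independent, the three are pairwise disjoint closed intervals, hence linearly ordered along $\RR$, say with $I_1$ to the left of $I_2$ and $I_2$ to the left of $I_3$. Writing $r_I$ for the right endpoint of $I$, disjointness gives $\ell_{I_1}\le r_{I_1}<\ell_{I_2}\le r_{I_2}<\ell_{I_3}$. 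Since $W$ is an interval containing $\ell_{I_1}$ and $\ell_{I_3}$, it contains the whole closed interval $[\ell_{I_1},\ell_{I_3}]$, and in particular $r_{I_1},\ell_{I_2},r_{I_2}\in W$; hence $I_1=[\ell_{I_1},r_{I_1}]\subseteq W$ and $I_2=[\ell_{I_2},r_{I_2}]\subseteq W$. But $I_1$ and $I_2$ are disjoint, contradicting the hypothesis that all intervals of $\II$ contained in $W$ pairwise intersect. Hence $|\phi^{-1}(W)|\le 2$.

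To obtain the strict inequality rather than merely $\alpha(\II)\le 2|\WW|$, I would single out the rightmost window $W^{\ast}$ of $\WW$: it is unbounded to the right, so as soon as an interval's left endpoint lies in $W^{\ast}$ the whole interval lies in $W^{\ast}$. Thus two intervals of $\mathcal{O}$ with left endpoint in $W^{\ast}$ would be two disjoint intervals of $\II$ contained in $W^{\ast}$, again contradicting the hypothesis; so $|\phi^{-1}(W^{\ast})|\le 1$. Summing over $\WW$ yields $\alpha(\II)=\sum_{W\in\WW}|\phi^{-1}(W)|\le 2(|\WW|-1)+1=2|\WW|-1$, completing the proof.

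The one place that needs care is the endpoint bookkeeping: windows may be open, half-open, or closed, so the step ``$W$ contains $[\ell_{I_1},\ell_{I_3}]$'' must be justified purely from $W$ being an interval (i.e. convex), and one must be sure that $\ell_I$ genuinely belongs to the assigned window $\phi(I)$; this is routine but is where an informal argument could slip. A symmetric alternative that avoids the special treatment of $W^{\ast}$ is to classify each $I\in\mathcal{O}$ as either contained in a single window (at most one such interval per window, directly by the hypothesis) or straddling the boundary between two consecutive windows (at most one such interval per boundary, since two disjoint intervals cannot both straddle the same boundary point), which gives $\alpha(\II)\le|\WW|+(|\WW|-1)$ at once.
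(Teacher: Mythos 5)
Your proposal is correct. Your main route is a close cousin of, but not identical to, the paper's argument: the paper splits an optimal set $\JJ^*$ into the intervals contained in some window (at most one per window by the pairwise-intersection hypothesis, hence at most $k=|\WW|$ of them) and the intervals meeting two or more windows (pairwise disjoint and each covering a boundary between consecutive windows, hence at most $k-1$ of them), which gives $\alpha(\II)\le 2k-1$ in one step. You instead charge each optimal interval to the window containing its left endpoint, show each window receives at most two charges (via the convexity argument that a middle interval would be wholly contained in the window), and recover the crucial $-1$ by the separate observation that the rightmost, right-unbounded window can receive only one charge. Both arguments are sound and of comparable length; the paper's decomposition gets the off-by-one for free from counting the $k-1$ boundaries, whereas yours needs the rightmost-window case (and hence the finiteness of $\WW$, which you do justify). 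The ``symmetric alternative'' you sketch in your final sentence is essentially verbatim the paper's proof. Your caution about endpoint types is well placed, but as you note only convexity of the windows is used, so nothing breaks.
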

\begin{proof}
	Let us set $k=|\WW|$.
	Consider a largest independent set of intervals $\JJ^*\subseteq \II$.
	We have $|\JJ^*|=\alpha(\II)$. Split the set $\JJ^*$ into two sets $\JJ^*_\subset$ 
	and $\JJ^*_\cap$ as follows: $\JJ^*_\subset$ contains the intervals contained in some window
	of $\WW$ and $\JJ^*_\cap$ contains the other intervals.
	See Figure~\ref{fig:partition} for an example. 
	The intervals in $\JJ^*_\cap$ are pairwise disjoint and each of them intersects
	at least two consecutive windows from $\WW$, thus $|\JJ^*_\cap|\le k-1$. 
	Since all intervals contained
	in a window of $\WW$ pairwise intersect, $\JJ^*_\subset$ has at most one 
    interval per window.
	Thus $|\JJ^*_\subset|\le k$. Putting things together we have 
	\[
		\alpha(\II) ~=~ |\JJ^*| ~=~ |\JJ^*_\cap| + |\JJ^*_\subset| ~\le~ k -1 + k ~=~2k-1.
	\]
	Since $\JJ$ contains exactly $|\WW|=k$ intervals, we obtain
	\[
		2\cdot |\JJ| ~=~ 2k ~>~ 2k-1 ~\ge~ \alpha(\II) ,
	\]
    which implies the result.\qedhere
\end{proof}

\begin{figure}
	\centering
	\includegraphics[page=2]{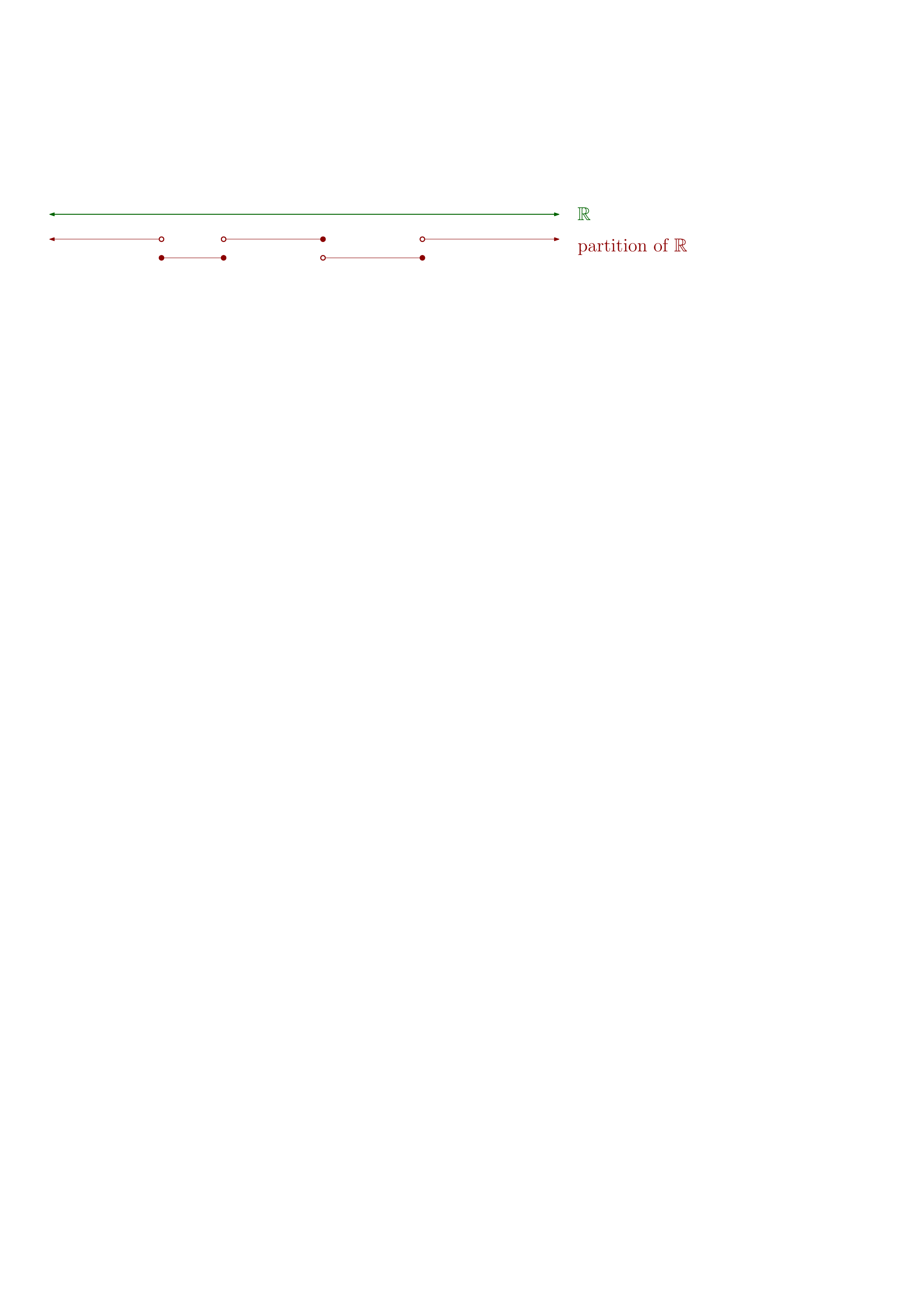}
	\caption{At the bottom there is a partition of the real line. Filled-in disks
		indicate that the endpoint is part of the interval; empty disks
		indicate that the endpoint is not part of the interval.
		At the top, we show the split of some optimal solution $\JJ^*$ (dotted blue)
		into $\JJ^*_\subset$ and $\JJ^*_\cap$.}
	\label{fig:partition}
\end{figure}

We now discuss the algorithm.
Through the processing of the stream, we maintain a partition $\WW$ of the line 
so that $\WW$ satisfies the hypothesis of Lemma~\ref{le:2approx}. 
To carry this out,
for each window $W$ of $\WW$ we store the intervals $\leftmost(W)$ and $\rightmost(W)$.
See Figure~\ref{fig:partition3} for an example.
To initialize the structures, we start with a unique window $\WW = \{ \RR \}$
and set $\leftmost(W)= \rightmost(W)= I_0$, where $I_0$ is the first interval of the
stream. With such initialization, the hypothesis of Lemma~\ref{le:2approx} hold
and $\leftmost()$ and $\rightmost()$ have the correct values.

\begin{figure}
	\centering
	\includegraphics[page=3, width=.9\textwidth]{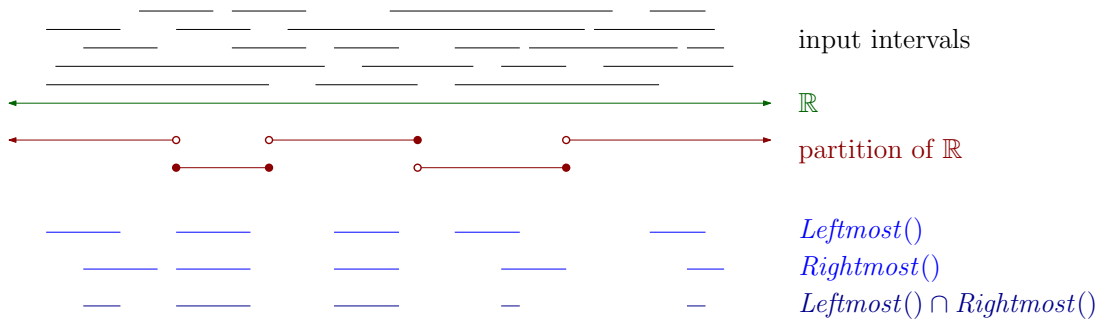}
	\caption{Data maintained by the algorithm.}
	\label{fig:partition3}
\end{figure}

With a few local operations, we can handle the insertion of a new interval in $\II$.
Consider a new interval $I$ of the stream. If $I$ is not contained in any window of
$\WW$, we do not need to do anything. If $I$ is contained in a window $W$,
we check whether it intersects all intervals contained $W$. If $I$ intersects
all intervals contained in $W$, we may have to update $\leftmost(W)$ and $\rightmost(W)$.
If $I$ does not intersect all the intervals contained in $W$, then
$I$ is disjoint from $\leftmost(W)\cap \rightmost(W)$.
In such a case we can use one endpoint of $\leftmost(W)\cap \rightmost(W)$ 
to split the window $W$ into two windows $W_1$ and
$W_2$, one containing $I$ and the other containing either 
$\leftmost(W)$ or $\rightmost(W)$, so that the 
assumptions of Lemma~\ref{le:2approx} are restored. We also have enough information
to obtain $\leftmost()$ and $\rightmost()$ for the new windows $W_1$ and $W_2$.
Figures~\ref{fig:update1} and~\ref{fig:update2} show two possible scenarios.
See the pseudocode in Figure~\ref{fig:code} for a more detailed description.

\begin{figure}
	\begin{process}{}{interval $I=[x,y]$}
		find the window $W$ of $\WW$ that contains $x$\\
		$[\ell,r]$ \qlet $\leftmost(W)\cap \rightmost(W)$\\
		\qif $y \in W$ \qthen\\
			\qif $[\ell,r]\cap [x,y]\not= \emptyset$ \qthen\\
				\qif $\ell < x$ \qor 
						$\bigl( \ell=x ~\qand~ [x,y]\subset \rightmost(W) \bigr)$ \qthen\\
					$\rightmost(W)\qlet [x,y]$ \qfi \\
				\qif $y < r$ \qor 
						$\bigl( y=r ~\qand~ [x,y]\subset \leftmost(W) \bigr)$ \qthen\\
					$\leftmost(W)\qlet [x,y]$ \qfi 
			\qelse  ~~(* $[\ell,r]$ and $[x,y]$ are disjoint; split $W$ *) \\ 
				\qif $x> r$ \qthen  ~~(* $[\ell,r]$ to the left of $[x,y]$ *)\\
					make new windows $W_1= W\cap (-\infty, r]$ and $W_2= W \cap (r, +\infty)$\\
					$\leftmost(W_1)\qlet \leftmost (W)$\\
					$\rightmost(W_1)\qlet \leftmost(W)$\\
					$I' \qlet \leftmost(W)$\\
					$\leftmost(W_2)\qlet [x,y]$\\
					$\rightmost(W_2)\qlet [x,y]$
				\qelse ~~(* $y< \ell$, $[\ell,r]$ to the right of $[x,y]$*)\\
					make new windows $W_1= W\cap (-\infty, \ell)$ and $W_2= W \cap [\ell, +\infty)$\\
					$\leftmost(W_1)\qlet [x,y]$\\
					$\rightmost(W_1)\qlet [x,y]$\\
					$\leftmost(W_2)\qlet \rightmost(W)$\\
					$\rightmost(W_2)\qlet \rightmost(W)$\\
					$I' \qlet \rightmost(W)$
				\qfi\\
				remove $W$ from $\WW$\\
				add $W_1$ and $W_2$ to $\WW$\\
				remove from $\JJ$ the interval that is contained in $W$\\
				add to $\JJ$ the intervals $[x,y]$ and $I'$
			\qfi
		\qfi\\
		(* If $y\notin W$ then $[x,y]$ is not contained in any window *)
	\end{process}
	\caption{Policy to process a new interval $[x,y]$. 
		$\WW$ maintains a partition of the real line
		and $\JJ$ maintains a $2$-approximation to $\alpha(\II)$.}
	\label{fig:code}
\end{figure}

\begin{figure}
	\centering
	\includegraphics[page=4]{figures/2approx}
	\caption{Example handled by lines 5--6 of the algorithm. The endpoints represented 
		by crosses may be in the window or not.}
	\label{fig:update1}
\end{figure}

\begin{figure}
	\centering
	\includegraphics[page=5]{figures/2approx}
	\caption{Example handled by lines 11--16 of the algorithm. The endpoints represented 
		by crosses may be in the window or not.}
	\label{fig:update2}
\end{figure}

\begin{lemma}
\label{le:policy}
	The policy described in Figure~\ref{fig:code} maintains a set of windows $\WW$
	and a set of intervals $\JJ$ that satisfy the assumptions of Lemma~\ref{le:2approx}.
\end{lemma}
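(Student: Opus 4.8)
The natural approach is a proof by induction on the number of intervals processed so far, maintaining the invariant that (1) $\WW$ is a partition of the real line, (2) every window of $\WW$ contains at least one interval of $\II$ (the portion of the stream processed), (3) the intervals contained in each window pairwise intersect, (4) $\leftmost(W)$ and $\rightmost(W)$ store the correct intervals for each $W\in\WW$, and (5) $\JJ$ contains exactly one interval of $\II$ per window of $\WW$, namely one contained in that window. The base case is the initialization: $\WW=\{\RR\}$, with $\leftmost(\RR)=\rightmost(\RR)=I_0$ and $\JJ=\{I_0\}$; all five conditions hold trivially since $I_0$ is the only interval seen.

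For the inductive step I would process a new interval $I=[x,y]$ and verify the invariant is preserved, following the case split in the pseudocode of Figure \ref{fig:code}. If $y\notin W$ (where $W$ is the window containing $x$), then $I$ is not contained in any window and nothing changes; the invariant persists since $\II$ grew only by an interval that is irrelevant to all windows. If $y\in W$ and $I$ meets $\leftmost(W)\cap\rightmost(W)$, then since every interval contained in $W$ contains the point $\leftmost(W)\cap\rightmost(W)$ (recall the remark that the intersection of all intervals in $W$ equals $\leftmost(W)\cap\rightmost(W)$), the new interval $I$ pairwise intersects all of them, so (3) still holds; here I check that the conditional updates of $\leftmost(W)$ and $\rightmost(W)$ in lines 5--6 correctly restore (4) by the tie-breaking rules in the definitions, and $\JJ$ is untouched, which is fine since the interval of $\JJ$ in $W$ is still contained in $W$. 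The remaining case is $y\in W$ but $[\ell,r]\cap[x,y]=\emptyset$: then $I$ lies entirely on one side of $[\ell,r]=\leftmost(W)\cap\rightmost(W)$. Say $x>r$ (the case $y<\ell$ is symmetric). I split $W$ at $r$ into $W_1=W\cap(-\infty,r]$ and $W_2=W\cap(r,+\infty)$. I would argue: every interval of $\II$ previously contained in $W$ either contains $r$ (hence meets $[\ell,r]$, so it is not disjoint from it — in particular $\leftmost(W)$, whose right endpoint is $r$, lies in $W_1$) or... actually every interval contained in $W$ contains the point $r$ if $r$ is the right endpoint of $\leftmost(W)$, so I need the finer observation that intervals contained in $W$ with right endpoint $\le r$ go to $W_1$ and those with left endpoint $>r$ go to $W_2$; the key point is that $I$ has $x>r$ so $I\subseteq W_2$, while $\leftmost(W)\subseteq W_1$, so both new windows are nonempty, giving (2). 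For (3): the intervals landing in $W_1$ all pairwise intersected before and are a subset of the old family, and likewise for $W_2$, except $W_2$ also gains $I$; but any old interval contained in $W_2$ has left endpoint $>r\ge\ell$ while its right endpoint is $\le$ (old right end of $W$), and one checks it must still contain a common point with $I$ — here I use that $\rightmost(W)$ has the largest left endpoint among intervals in $W$ and $I$'s left endpoint $x>r\ge$ that left endpoint would contradict... so in fact no old interval of $\II$ need be contained in $W_2$ at all, or if one is, it must intersect $I$; this deserves care. The assignments of $\leftmost$ and $\rightmost$ to the two new windows in lines 12--18 then give (4), and updating $\JJ$ by removing the old representative and adding $I$ and $I'$ (one of $\leftmost(W),\rightmost(W)$) restores (5) with one interval per new window.

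The main obstacle I anticipate is exactly the verification in the splitting case that the two resulting windows satisfy the pairwise-intersection property (3) and that the stored $\leftmost/\rightmost$ values remain correct after the split — this is where the geometry must be pinned down precisely, handling the endpoint-inclusion types of the windows and the tie-breaking rules in the definitions of $\leftmost$ and $\rightmost$, and confirming that the interval whose membership straddles the cut point $r$ is correctly assigned. Once those local facts are checked against each branch of the pseudocode, the lemma follows, since conditions (1)--(3) are precisely the hypotheses of Lemma \ref{le:2approx} and condition (5) says $\JJ$ is of the form required there.
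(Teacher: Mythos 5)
Your overall strategy---induction over the stream with invariants (1)--(5) and a case analysis following the branches of the pseudocode---is exactly the approach of the paper's proof, and the non-splitting cases are handled correctly. The problem is that at the decisive point, the splitting case, you leave the argument unresolved (``no old interval of $\II$ need be contained in $W_2$ at all, or if one is, it must intersect $I$; this deserves care''), and the second alternative is not something you could actually prove: an old interval contained in $W_2$ would \emph{not} automatically intersect the new interval. What closes the gap is a single fact you quote in the earlier case but never deploy here: $[\ell,r]=\leftmost(W)\cap\rightmost(W)$ is the intersection of \emph{all} intervals of $\II$ contained in $W$, so every such interval contains the whole segment $[\ell,r]$ as a subset---in particular it has left endpoint at most $\ell$ and right endpoint at least $r$. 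With this, the splitting case (say $x>r$) is immediate. No old interval can be contained in $W_2=W\cap(r,+\infty)$, since its left endpoint is at most $\ell\le r$; hence $W_2$ contains only the new interval $I$, and properties (2)--(4) for $W_2$ are trivial. Dually, the old intervals contained in $W_1=W\cap(-\infty,r]$ are exactly those with right endpoint equal to $r$; they all contain $[\ell,r]$ and therefore pairwise intersect, and by the tie-breaking rule $\leftmost(W)$ is the one among them with the largest left endpoint, so it correctly serves as both $\leftmost(W_1)$ and $\rightmost(W_1)$, which is exactly what the pseudocode assigns. This is precisely the computation the paper's proof carries out for this case; once you add that observation, your plan coincides with the published argument.
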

\begin{proof}
	A simple case analysis shows that the policy maintains the assumptions
	of Lemma~\ref{le:2approx} and the properties of $\leftmost()$ and $\rightmost()$.
	
	Consider, for example, the case when the new interval $[x,y]$ 
	is contained in a window $W\in \WW$
	and $[\ell,r]=\leftmost(W)\cap \rightmost(W)$ is to the left of $[x,y]$. 
	In this case, the algorithm will update the structures in lines 11--16
	and lines 24--27. 
    See Figure~\ref{fig:update2} for an example.
	By inductive hypothesis, all the intervals in $\II\setminus \{[x,y] \}$ 
	contained in $W$ intersect $[\ell,r]$. Note that $W_1=W\cap (-\infty,r]$,
	and thus only the intervals contained in $W$ with right endpoint $r$ are contained in $W_1$.
	By the inductive hypothesis, $\leftmost(W)$ has right endpoint $r$
	and has largest left endpoint among all intervals contained in $W_1$.
	Thus, when we set
	$\rightmost(W_1)=\leftmost(W_1)=\leftmost(W)$, the correct values for $W_1$ are set.
	As for $W_2=W\cap (r,+\infty)$,
	no interval of $\II\setminus \{[x,y] \}$ is contained in $W_2$, thus 
	$[x,y]$ is the only interval contained in $W_2$ and setting
	$\rightmost(W_2)=\leftmost(W_2)=[x,y]$ we get the correct values for $W_2$.
	Lines 24--27 take care to replace $W$ in $\WW$ by $W_1$ and $W_2$.
	For $W_1$ and $W_2$ we set the correct values of 
	$\leftmost()$ and $\rightmost()$ and the assumptions of Lemma~\ref{le:2approx} hold.
	For the other windows of $\WW\setminus \{W \}$ nothing is changed.\qedhere
\end{proof}

We can store the partition of the real line $\WW$ using a dynamic binary search
tree. With this, line 1 and lines 24--25 take $O(\log |\WW|)=O(\log \alpha(\II))$ time.
The remaining steps take constant time. 
The space required by the data structure is $O(|\WW|)=O(\alpha(\II))$.
This shows the following result.

\begin{theorem}
\label{thm:lis}
	Let $\II$ be a set of intervals in the real line that arrive in a data stream.
	There is a data stream algorithm to compute a $2$-approximation to the largest 
	independent subset of $\II$ that uses $O(\alpha(\II))$ space and handles each
	interval of the stream in $O(\log \alpha(\II))$ time.
\end{theorem}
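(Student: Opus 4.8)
The plan is to run the algorithm sketched just above the statement: process the stream with the policy of Figure~\ref{fig:code}, maintaining the partition $\WW$ together with $\leftmost(W)$ and $\rightmost(W)$ for every $W\in\WW$, and the set $\JJ$ holding one interval of $\II$ per window. First I would appeal to Lemma~\ref{le:policy}, which guarantees that after each update $\WW$ and $\JJ$ still meet the hypotheses of Lemma~\ref{le:2approx}: $\WW$ is a partition of $\RR$ whose windows each contain at least one of the intervals seen so far, the intervals inside a common window pairwise intersect, and $\JJ$ contains exactly one interval from each window. The base case is the initialization $\WW=\{\RR\}$ with $\leftmost(\RR)=\rightmost(\RR)=I_0$ and $\JJ=\{I_0\}$ once the first interval $I_0$ arrives, which trivially satisfies these conditions.

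Correctness then follows by applying Lemma~\ref{le:2approx} to the set of intervals processed so far --- equal to $\II$ at the end of the stream --- which yields $|\JJ|>\alpha(\II)/2$. Because the windows of $\WW$ are pairwise disjoint and each interval of $\JJ$ is contained in its own window, $\JJ$ is an independent subset of $\II$, hence $|\JJ|\le\alpha(\II)$. So $\JJ$ is a feasible solution with $\alpha(\II)/2<|\JJ|\le\alpha(\II)$, i.e.\ a $2$-approximation, and it is available at every point of the stream.

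For the resource bounds I would reuse the same independence observation: $|\WW|=|\JJ|\le\alpha(\II)$, so there are never more than $\alpha(\II)$ windows, and each window stores only the constantly many intervals $\leftmost(W)$, $\rightmost(W)$ and the chosen one. Keeping $\WW$ in a balanced binary search tree keyed by window endpoints uses $O(|\WW|)=O(\alpha(\II))$ space. Processing a new interval $[x,y]$ performs one search for the window containing $x$ (line~1) and, in the splitting case, two insertions and one deletion in the tree (lines~24--25), each costing $O(\log|\WW|)=O(\log\alpha(\II))$; the other lines only read and overwrite $O(1)$ stored intervals of a single window, so they take $O(1)$. This gives $O(\log\alpha(\II))$ time per stream element.

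I do not expect a genuine obstacle here, since all the structural work is carried by Lemmas~\ref{le:2approx} and~\ref{le:policy}. The two points that require care are (i) verifying that $\JJ$ is actually independent --- immediate from disjointness of the windows --- so that the bound $|\JJ|>\alpha(\II)/2$ certifies a $2$-approximation and not merely a lower bound on $\alpha(\II)$; and (ii) bounding the size of $\WW$, where the elementary inequality $|\WW|\le\alpha(\II)$ is what keeps both the space and the per-element running time within the claimed bounds.
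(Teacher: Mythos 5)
Your proposal is correct and follows the paper's own argument essentially verbatim: run the policy of Figure~\ref{fig:code}, invoke Lemma~\ref{le:policy} to maintain the hypotheses of Lemma~\ref{le:2approx}, and store $\WW$ in a balanced search tree to get the $O(\alpha(\II))$ space and $O(\log\alpha(\II))$ update time. Your explicit remarks that $\JJ$ is independent (so $|\JJ|\le\alpha(\II)$ and hence $|\WW|\le\alpha(\II)$) only make precise what the paper leaves implicit.
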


%%%%%%%%%%%%%%%%%%%%%%%%%%%%%%%%%%%%%%%%%%%%%%%%%%%%%%%%%%%%%%%%%%%%%%%%%%%%%%%%%%%%%%%%%%%%%%55
\section{Size of largest independent set of intervals}
\label{sec:slis}

In this section we show how to obtain a randomized estimate
of the value $\alpha(\II)$. 
We will assume that the endpoints of the intervals are in $[n]$. 

Using the approach of Knuth~\cite{k-75},
the algorithm presented in Section~\ref{sec:lis} can be used to define an estimator
whose expected value lies between $\alpha(\II)/2$ and $\alpha(\II)$.
However, it has large variance and we cannot use it to obtain an estimate of $\alpha(\II)$
with good guarantees. The precise idea is as follows.

The windows appearing through the algorithm of Section~\ref{sec:lis} 
naturally define a rooted binary tree $T$, where each node represents a window. 
At the root of $T$ we have the whole real line. 
Whenever a window $W$ is split into two windows $W'$ and $W''$, 
in $T$ we have nodes for $W'$ and $W''$ with parent $W$. 
The size of the output is the number of windows in the final partition,
which is exactly the number of leaves in $T$. 
Knuth~\cite{k-75} shows how to obtain an unbiased estimator of the number of leaves
of a tree. This estimator is obtained by choosing random root-to-leaf paths.
(At each node, one can use different rules to select how the random path continues.)
Unfortunately, the estimator has very large variance and cannot be used to obtain
good guarantees. Easy modifications of the method do not seem to work, 
so we develop a different method.

Our idea is to carefully split the window $[1,n]$ into segments,
and compute for each segment a 2-approximation. If each segment contains 
enough disjoint intervals from the input, then we do not do much error combining the
results of the segments.
We then have to estimate the number of segments in the partition of $[1,n]$
and the number of independent intervals in each segment.
First we describe the ingredients, independent of the streaming model,
and discuss their properties.
Then we discuss how those ingredients can be computed in the data streaming model.

\subsection{Segments and their associated information}
Let $T$ be a balanced segment tree on the $n$ segments $[i,i+1)$, $i\in [n]$.
Each leaf of $T$ corresponds to a segment $[i,i+1)$ and the order of the leaves in $T$
agrees with the order of their corresponding intervals along the real line.
Each node $v$ of $T$ has an associated segment, denoted by $S(v)$, 
that is the union of all segments stored at its descendants.
It is easy to see that, for any interval node $v$ with children $v_\ell$ and $v_r$,
the segment $S(v)$ is the disjoint union of $S(v_\ell)$ and $S(v_r)$. 
See Figure~\ref{fig:segmenttree} for an example.
We denote the root of $T$ by $r$. We have $S(r)=[1,n+1)$.

\begin{figure}
	\centering
	\includegraphics[page=1]{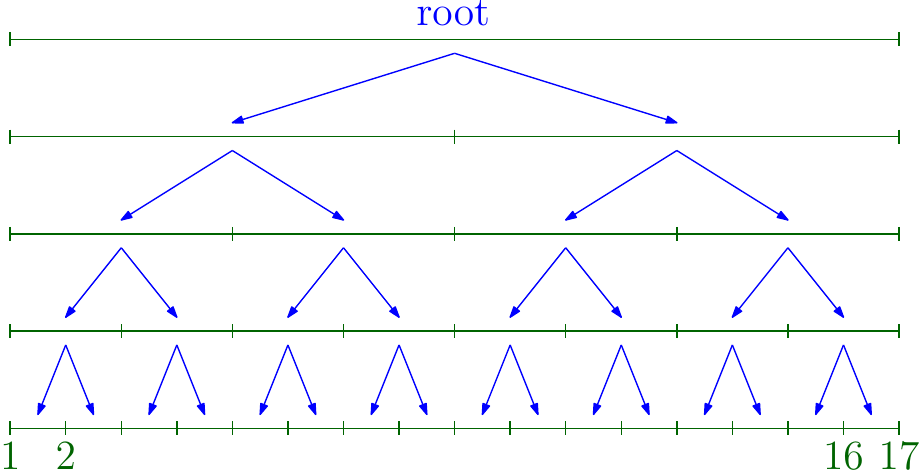}
	\caption{Segment tree for $n=16$.}
	\label{fig:segmenttree}
\end{figure}

Let $\SS$ be the set of segments associated with all nodes of $T$. 
Note that $\SS$ has $2n-1$ elements.
Each segment $S\in \SS$ contains the left endpoint and does not contain the right endpoint.

For any segment $S\in\SS$, where $S\not= S(r)$, let $\pi(S)$ be the ``parent" segment
of $S$: this is the segment stored at the parent of $v$, where $S(v)=S$.

For any $S\in \SS$, let $\beta(S)$ be the size of the largest
independent subset of $\{ I\in \II \mid I\subset S\}$. That is, we consider
the restriction of the problem to intervals of $\II$ contained in $S$.
Similarly, let $\hat\beta(S)$ be the size of a feasible solution computed
for $\{ I\in \II \mid I\subset S\}$ by the 2-approximation algorithm described
in Section~\ref{sec:lis} or by the algorithm
of Emek, Halld{\'o}rsson and Ros{\'e}n~\cite{EmekHR12}.
We thus have $\beta(S)\ge \hat\beta(S)\ge \beta(S)/2$ for all $S\in\SS$.

\begin{lemma}
\label{le:addingtree}
	Let $\SS'\subset \SS$ be a set of segments with the following properties:
	\begin{itemize}
		\item[\textup{(i)}] $S(r)$ is the disjoint union of the segments in $\SS'$, and,
		\item[\textup{(ii)}] for each $S\in \SS'$, 
			we have $\beta(\pi(S))\ge 2 \eps^{-1} \lceil \log n\rceil$.
	\end{itemize}
	Then,
	\[
		\alpha (\II) ~\ge~
		\sum_{S\in \SS'} \hat\beta(S) ~\ge~ \left(\frac{1}{2}-\eps\right)\, \alpha (\II).
	\]
\end{lemma}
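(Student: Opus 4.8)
The plan is to bound the sum $\sum_{S \in \SS'} \hat\beta(S)$ from both sides. The upper bound $\alpha(\II) \ge \sum_{S\in\SS'} \hat\beta(S)$ is the easy direction: since the segments in $\SS'$ are pairwise disjoint (property (i)), a union of optimal solutions for the restricted problems $\{I \in \II \mid I \subset S\}$, $S \in \SS'$, is an independent set in $\II$, so $\sum_{S} \beta(S) \le \alpha(\II)$; and $\hat\beta(S) \le \beta(S)$ by the 2-approximation guarantee. So the real work is the lower bound.

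For the lower bound, first I would relate $\sum_{S\in\SS'}\beta(S)$ to $\alpha(\II)$. Fix an optimal independent set $\JJ^*$, $|\JJ^*| = \alpha(\II)$. Each interval of $\JJ^*$ either is contained in some $S \in \SS'$, or it straddles the boundary between two consecutive segments of $\SS'$ (it cannot straddle more, since a segment-tree segment is an ``aligned'' dyadic interval and an interval of $\II$ contained in no $S \in \SS'$ but overlapping several of them would have to cross at least one boundary point shared by two consecutive segments of $\SS'$; I should check the precise combinatorial claim that at most $|\SS'| - 1$ intervals can be of the straddling type, exactly as in the proof of Lemma~\ref{le:2approx}). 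Hence $\alpha(\II) \le \sum_{S \in \SS'}\beta(S) + (|\SS'|-1)$, i.e.\ $\sum_{S\in\SS'}\beta(S) \ge \alpha(\II) - |\SS'| + 1$.

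Next I would use the 2-approximation bound $\hat\beta(S) \ge \beta(S)/2$ together with the key hypothesis (ii), which controls $|\SS'|$. The point of (ii) is that for each $S \in \SS'$ the parent segment $\pi(S)$ has $\beta(\pi(S)) \ge 2\eps^{-1}\lceil\log n\rceil$, and since $S \subset \pi(S)$ we cannot directly lower-bound $\beta(S)$, but we can amortize: summing $\beta(\pi(S))$ over $S \in \SS'$ and using that each segment of $\SS$ is the parent of at most two segments and has depth at most $\lceil \log n\rceil$, the sum $\sum_{S\in\SS'}\beta(\pi(S))$ telescopes up the tree to at most $2\lceil\log n\rceil \cdot \alpha(\II)$ (each level contributes at most $\alpha(\II)$ since the segments at a fixed level are disjoint). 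Combining with (ii), $|\SS'|\cdot 2\eps^{-1}\lceil\log n\rceil \le \sum_{S\in\SS'}\beta(\pi(S)) \le 2\lceil\log n\rceil\,\alpha(\II)$, hence $|\SS'| \le \eps\,\alpha(\II)$. Plugging this into the previous display gives $\sum_{S\in\SS'}\beta(S) \ge (1-\eps)\alpha(\II)$, and therefore
\[
	\sum_{S\in\SS'}\hat\beta(S) ~\ge~ \tfrac12\sum_{S\in\SS'}\beta(S) ~\ge~ \tfrac12(1-\eps)\alpha(\II) ~\ge~ \left(\tfrac12 - \eps\right)\alpha(\II),
\]
as claimed.

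The main obstacle I anticipate is the amortization argument bounding $|\SS'|$: one has to be careful that the map $S \mapsto \pi(S)$ is not injective (two sibling leaves of $\SS'$ share a parent) and that different $S \in \SS'$ can have parents at different depths, so the clean ``one $\alpha(\II)$ per level'' accounting needs the observation that at each fixed depth the segments are pairwise disjoint, hence the $\beta$-values at that depth sum to at most $\alpha(\II)$, and each parent segment is counted with multiplicity at most $2$. Everything else — the straddling count and the $2$-approximation sandwiching — is routine and parallels Lemma~\ref{le:2approx}.
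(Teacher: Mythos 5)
Your proof is correct, and it follows the same overall skeleton as the paper's (easy upper bound from disjointness; count the straddling intervals of an optimal solution; use hypothesis~(ii) to show $|\SS'|\le \eps\,\alpha(\II)$; finish with the factor-$2$ guarantee), but the two counting steps are carried out by genuinely different arguments. For the straddlers, you charge each interval of $\JJ^*$ not contained in any $S\in\SS'$ to one of the $|\SS'|-1$ internal boundary points, as in Lemma~\ref{le:2approx}, getting $\alpha(\II)\le\sum_{S\in\SS'}\beta(S)+|\SS'|-1$; the paper instead uses the cruder per-segment bound $|\{J\in\JJ^*\mid J\cap S\neq\emptyset\}|\le\beta(S)+2$, paying $2|\SS'|$ in total. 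For the key bound $|\SS'|\le\eps\,\alpha(\II)$, the paper extracts the leafmost parents $\tilde\SS$, uses their pairwise disjointness to get $|\tilde\SS|\le\alpha(\II)/(2\eps^{-1}\lceil\log n\rceil)$, and then covers all parents of $\SS'$ by the $|\tilde\SS|$ root-to-$\tilde S$ paths; you instead sum $\beta(\pi(S))$ over $S\in\SS'$ directly and do a per-level accounting (segments at a fixed level are disjoint, so their $\beta$-values sum to at most $\alpha(\II)$; each parent has multiplicity at most $2$; there are at most $\lceil\log n\rceil$ levels of internal nodes), which yields the same inequality $|\SS'|\cdot 2\eps^{-1}\lceil\log n\rceil\le 2\lceil\log n\rceil\,\alpha(\II)$. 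Your route is arguably cleaner — it avoids introducing $\tilde\SS$ altogether — and your tighter straddler count even gives the slightly stronger conclusion $\sum_{S\in\SS'}\hat\beta(S)\ge\tfrac12(1-\eps)\alpha(\II)$, which of course implies the stated $(\tfrac12-\eps)\alpha(\II)$. The one phrasing to tighten is the parenthetical suggesting a straddling interval meets only two consecutive segments: it may overlap many segments of $\SS'$, but it still contains at least one internal boundary point, and disjointness of $\JJ^*$ assigns each boundary point to at most one interval, which is all the argument needs.
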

\begin{proof}
	Since the segments in $\SS'$ are disjoint because of hypothesis (i), 
	we can merge the solutions giving $\beta(S)$ independent intervals, 
	for all $S\in \SS'$, to obtain a feasible solution for the whole $\II$. 
	We conclude that
	\[
		\alpha (\II) ~\ge~ \sum_{S\in \SS'} \beta(S) ~\ge~ \sum_{S\in \SS'} \hat\beta (S).
	\]
	This shows the first inequality. 
	
	Let $\tilde \SS$ be the set of leafmost elements in 
	the set of parents $\{ \pi(S)\mid S\in \SS'\}$.
    Thus, each $\tilde S\in \tilde\SS$ has some child in $\SS'$ and 
    no descendant in $\tilde \SS$.
	For each $\tilde S\in \tilde \SS$, let $\Pi_T(\tilde S)$ be the path
	in $T$ from the root to $\tilde S$.
	By construction, for each $S\in \SS'$ there exists some $\tilde S\in \tilde \SS$
	such that the parent of $S$ is on $\Pi_T(\tilde S)$.
	By assumption (ii), for each $\tilde S\in \tilde \SS$, 
	we have $\beta(\tilde S)\ge 2 \eps^{-1}\lceil \log n\rceil$.
	Each $\tilde S\in \tilde\SS$ is going to ``pay" for the error we make in the sum
	at the segments whose parents belong to $\Pi_T(\tilde S)$.

	Let $\JJ^*\subset \II$ be an optimal solution to the interval selection problem. 
	For each segment $S\in \SS$, $\JJ^*$ has at most 2 intervals that 
	intersect $S$ but are not contained in $S$. 
	Therefore, for all $S\in \SS$ we have that
	\begin{equation} \label{eq0}
		|\{ J\in \JJ^*\mid J\cap S\not= \emptyset \}| ~\le~
		|\{ J\in \JJ^*\mid J\subset S \}| +2 ~\le~ 
		\beta(S)+2.
	\end{equation}
	
	The segments in $\tilde \SS$ are pairwise disjoint because in $T$ none is a descendant
	of the other. This means that we can join solutions obtained inside the segments
	of $\tilde \SS$ into a feasible solution. Combining this with hypothesis~(ii) we get
    \begin{equation}
    	\label{eq1}
       	|\JJ^*| ~\ge~ \sum_{\tilde S\in \tilde\SS} \beta(\tilde S)
		~\ge~ |\tilde\SS|\cdot 2 \eps^{-1}\lceil \log n\rceil.
    \end{equation}

	For each $\tilde S\in \tilde \SS$, the path $\Pi_T(\tilde S)$
	has at most $\lceil\log n\rceil$ vertices.
	Since each $S\in\SS'$ has a parent in $\Pi_T(\tilde S)$, for some $\tilde S\in \tilde \SS$,
	we obtain from equation~\eqref{eq1} that
	\begin{equation} \label{eq2}
		|\SS' | ~\le~ 2\lceil\log n\rceil \cdot |\tilde \SS| 
		~\le~ 2\lceil\log n\rceil \cdot \frac{|\JJ^*|}{2 \eps^{-1}\lceil\log n\rceil} 
		~=~ \eps \cdot |\JJ^*|.
	\end{equation}
	
	Using that $S(r)$ is the union of the segments in $\SS'$ and equation~\eqref{eq0} we obtain
	\begin{align*}
		|\JJ^*| ~&\le~ \sum_{S\in \SS'} |\{ J\in \JJ^*\mid J\cap S\not= \emptyset \}|
		\\ & \le~  \sum_{S\in \SS'} (\beta(S) +2)
		\\ & =~  2\cdot |\SS'| + \sum_{S\in \SS'} \beta(S)
		\\ & \le~  2\eps \cdot |\JJ^*| + \sum_{S\in \SS'} \beta(S),
	\end{align*}
	where in the last inequality we used equation~\eqref{eq2}. 
	Now we use that
	\begin{equation*} 
		\forall S\in \SS:~~~~
			2\cdot \hat\beta(S) ~\ge~ \beta(S)
	\end{equation*}
	to conclude that
	\begin{align*}
		|\JJ^*| ~\le~  2\eps \cdot |\JJ^*| + \sum_{S\in \SS'} \beta(S) 
				~\le~  2\eps \cdot |\JJ^*| + \sum_{S\in \SS'} 2\cdot \hat\beta(S).
	\end{align*}
	The second inequality that we want to show follows because $|\JJ^*|=\alpha(\II)$.\qedhere
\end{proof}

We would like to find a set $\SS'$ satisfying the hypothesis of Lemma~\ref{le:addingtree}.
However, the definition should be local: to know whether a segment $S$ belongs to $\SS'$
we should use only local information around $S$.
The estimator $\hat\beta(S)$ is not suitable.
For example, it may happen that, for some segment $S\in \SS\setminus \{S(r)\}$,
we have $\hat\beta(\pi(S))< \hat\beta(S)$, which is counterintuitive and problematic.
We introduce another estimate that is an $O(\log n)$-approximation but is monotone nondecreasing
along paths to the root.

For each segment $S\in \SS$ we define
\[
	\gamma(S) ~=~ | \{ S'\in \SS\mid S'\subset S \text{ and } 
						\exists I\in \II \text{ s.t. } I\subset S'\} | . 
\]
Thus, $\gamma(S)$ is the number of segments of $\SS$ that are contained in $S$ and contain
some input interval.
\begin{lemma}
\label{le:gamma}
	For all $S\in \SS$, we have the following properties:
	\begin{itemize}
		\item[\textup{(i)}] $\gamma(S)\le \gamma(\pi(S))$, if $S\neq S(r)$,
		\item[\textup{(ii)}] $\gamma(S)\le \beta(S)\cdot \lceil\log n\rceil$,
		\item[\textup{(iii)}] $\gamma(S)\ge \beta(S)$, and
		\item[\textup{(iv)}] $\gamma(S)$ can be computed in $O(\gamma(S))$ space
			using the portion of the stream after the first interval contained in $S$.
	\end{itemize}
\end{lemma}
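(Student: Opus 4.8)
The plan is to establish the four properties essentially independently, each by a short structural argument about the segment tree $T$. For (i), monotonicity under taking parents: since every segment $S'$ with $S'\subseteq S$ also satisfies $S'\subseteq\pi(S)$ (because $S\subseteq\pi(S)$), and the condition ``$\exists I\in\II$ with $I\subseteq S'$'' does not depend on $S$, the set counted by $\gamma(S)$ is a subset of the set counted by $\gamma(\pi(S))$; hence $\gamma(S)\le\gamma(\pi(S))$. This is the easy part.

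For (ii), I would relate the segments of $\SS$ contained in $S$ that carry an input interval to a feasible independent set. The key observation is that every such segment $S'$ has depth at most $\lceil\log n\rceil$ below $S$ in $T$, so the collection of these segments can be grouped by depth into at most $\lceil\log n\rceil$ levels; within a single level the segments are pairwise disjoint, and picking one interval $I\subseteq S'$ from each gives an independent subset of $\{I\in\II\mid I\subseteq S\}$, of size equal to the number of segments at that level. Hence some level has at least $\gamma(S)/\lceil\log n\rceil$ segments, giving $\beta(S)\ge\gamma(S)/\lceil\log n\rceil$, i.e.\ $\gamma(S)\le\beta(S)\cdot\lceil\log n\rceil$. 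For (iii), I would argue the reverse: given an optimal independent set $\JJ^*_S\subseteq\{I\in\II\mid I\subseteq S\}$ of size $\beta(S)$, each interval $I\in\JJ^*_S$ lies in $S$, and since the leaf segments $[i,i+1)$ partition $S$, $I$ is contained in the segment $S(v)$ for the lowest node $v$ of $T$ whose segment still contains $I$; distinct intervals of an independent set give distinct such segments (two disjoint intervals cannot both be contained in the same leafmost segment — one would have to be a proper subset forcing a descendant), so $\gamma(S)\ge\beta(S)$. I should double-check the leafmost-segment assignment handles the case of several intervals sharing the same minimal segment; but since the intervals in $\JJ^*_S$ are pairwise disjoint and each minimal segment is either a leaf (length-$1$) or splits its interval's span, at most one member of $\JJ^*_S$ maps to each segment.

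For (iv), the streaming claim, I would note that computing $\gamma(S)$ only requires, for each of the $O(\gamma(S))$ relevant sub-segments $S'\subseteq S$, detecting whether \emph{some} interval contained in $S'$ has appeared — a one-bit-per-segment task — but the subtlety is that we cannot afford to track all $O(|S|)$ sub-segments of $S$. Instead I would maintain the set of ``active'' segments discovered so far (those known to contain an input interval), together with all their ancestors within $S$; when a new interval $I\subseteq S$ arrives we walk up from the leafmost segment containing $I$, marking segments, and we only ever store ancestors of already-marked leaves, so the total is $O(\gamma(S))$ segments, each stored in $O(\log n)$ space for its identifier — or $O(\gamma(S))$ words under the paper's memory convention. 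The whole computation uses only the portion of the stream following the first interval contained in $S$, since earlier intervals (not contained in $S$) contribute nothing to $\gamma(S)$.

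The main obstacle I anticipate is property~(ii): getting the correct dependence on $\lceil\log n\rceil$ requires the level-grouping argument to be set up carefully so that the disjointness of same-level segments is used exactly once and the pigeonhole step is clean; the other three properties are routine once the leafmost-segment correspondence in (iii) is pinned down.
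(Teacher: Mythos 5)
Your proposal is correct and follows essentially the same route as the paper: (i) by set inclusion, (ii) by the level-wise pigeonhole over the at most $\lceil\log n\rceil$ levels of the subtree rooted at $S$, (iii) by mapping each interval of an optimal solution to its minimal containing segment and using that this segment's midpoint lies in the interval (your phrase ``splits its interval's span''), and (iv) by storing only the discovered members of $\SS'$ in a search structure. The only wobble is your first parenthetical justification in (iii) — disjoint intervals \emph{can} share a containing segment, just not a minimal one — but your follow-up midpoint argument is exactly the paper's and fixes this.
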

\begin{proof}
	Property (i) is obvious from the definition because any $S'$ contained in $S$
	is also contained in the parent $\pi(S)$.
	
	For the rest of the proof, fix some $S\in \SS$ and define  
	\[
		\SS' ~=~ \{ S'\in \SS\mid S'\subset S \text{ and } 
						\exists I\in \II \text{ s.t. } I\subset S'\}.
	\]
    Note that $\gamma(S)$ is the size of $\SS'$. 
    Let $T_S$ the subtree of $T$ rooted at $S$.
    
    For property (ii), note that
	$T_S$ has at most $\lceil\log n\rceil$ levels. 
	By the pigeonhole principle, there is some level $L$ of $T_S$ 
	that contains at least $\gamma(S)/ \lceil\log n\rceil$ different intervals of $\SS'$.
	The segments of $\SS'$ contained in level $L$ are disjoint, and each of them
	contains some intervals of $\II$. Picking an interval from each $S'\in L$,
	we get a subset of intervals from $\II$ that are pairwise disjoint, and
	thus $\beta(S)\ge \gamma(S)/ \lceil\log n\rceil$. 
	
	For property (iii), consider an optimal solution $\JJ^*$ for the interval selection
	problem in $\{ I\in \II \mid I\subset S\}$. Thus $|\JJ^*|=\beta(S)$.
	For each interval $J\in \JJ^*$, let $S(J)$ be the smallest $S\in \SS$ that contains $J$.
	Then $S(J)\in \SS'$. Note that $J$ contains the middle point of $S(J)$,
	as otherwise there would be a smaller segment in $\SS$ containing $J$.
	This implies that the segments $S(J)$, $J\in \JJ^*$, are all distinct.
    (However, they are not necessarily disjoint.)
	We then have
	\[
		\gamma(S) ~=~ |\SS'| ~\ge~ |\{ S(J) \mid J\in \JJ^* \}| ~=~ |\JJ^*| ~=~ \beta(S).
	\]
	
	For property (iv), we store the elements of $\SS'$ in a binary search tree.
	Whenever we obtain an interval $I$, we check whether the segments contained in $S$ and
	containing $I$ are already in the search tree and, if needed, update the structure.
	The space needed in a binary search tree is proportional to the number of elements
	stored and thus we need $O(\gamma(S))$ space.
\end{proof}

A segment $S$ of $\SS$, $S\neq S(r)$, is \DEF{relevant} if 
\[
	\gamma(\pi(S))\ge 2 \eps^{-1}\lceil\log n\rceil^2
	\text{ ~and~ }
	1~\le~ \gamma(S)< 2 \eps^{-1}\lceil\log n\rceil^2.
\]
Let $\SS_{rel} \subset \SS$ be the set of relevant segments.
If $\SS_{rel}$ is empty, then we take $\SS_{rel}=\{ S(r)\}$.

Because of Lemma~\ref{le:gamma}(i), $\gamma(\cdot)$ is nondecreasing 
along a root-to-leaf path in $T$.
Using Lemmas~\ref{le:addingtree} and~\ref{le:gamma}, we obtain the following.

\begin{lemma}
\label{le:addingtree2}
	We have
	\[
		\alpha (\II) ~\ge~ \sum_{S\in \SS_{rel}} \hat\beta(S) ~\ge~ \left(\frac{1}{2}-\eps\right) \alpha (\II).
	\]
\end{lemma}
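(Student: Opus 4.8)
The plan is to derive this as a direct consequence of Lemma~\ref{le:addingtree}, by checking that $\SS_{rel}$ (as defined just above) satisfies the two hypotheses of that lemma, possibly with $\eps$ replaced by a constant multiple of $\eps$. First I would dispose of the degenerate case: if no segment is relevant, then $\SS_{rel}=\{S(r)\}$, and in that case $\gamma(S)< 2\eps^{-1}\lceil\log n\rceil^2$ for every proper descendant of the root; in particular, taking children of the root, we learn that $\gamma(S(r))$ itself is small, so by Lemma~\ref{le:gamma}(iii) and (ii) the value $\beta(S(r))=\alpha(\II)$ is bounded by something like $O(\eps^{-1}\log^2 n)$, and the single term $\hat\beta(S(r))$ is a $2$-approximation to all of $\alpha(\II)$; here one just needs $\eps$ small enough that $(1/2-\eps)\le 1/2$, which holds, and the claim is immediate. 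So the interesting case is $\SS_{rel}\neq\emptyset$.

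For the main case I would verify hypothesis (i) of Lemma~\ref{le:addingtree}: the segments of $\SS_{rel}$ are pairwise disjoint and their union is $S(r)$. This is where the monotonicity of $\gamma$ along root-to-leaf paths (Lemma~\ref{le:gamma}(i)) does the work. Fix any leaf segment $[i,i+1)$ that contains some input interval (if a leaf contains no interval, it is irrelevant to covering the active part of the line, and one should phrase the covering statement accordingly — or, more cleanly, observe that relevant segments with $\gamma\ge 1$ only cover the portion of $S(r)$ that lies under some interval, which is all that $\beta$ and $\hat\beta$ ever see). Walking from such a leaf up to the root, $\gamma$ is nondecreasing and equals $1$ at the leaf; the relevant ancestor is the unique node on this path where $\gamma$ first drops below the threshold $2\eps^{-1}\lceil\log n\rceil^2$ while its parent is still at or above it. Uniqueness along each leaf-to-root path, together with the fact that the subtrees rooted at distinct relevant segments are disjoint (no relevant segment is an ancestor of another, again by monotonicity of $\gamma$ and the threshold definition), gives that $\SS_{rel}$ is an antichain whose subtrees partition the active leaves, i.e. (i) holds.

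Then hypothesis (ii) of Lemma~\ref{le:addingtree} follows from Lemma~\ref{le:gamma}(ii): for $S\in\SS_{rel}$ we have $\gamma(\pi(S))\ge 2\eps^{-1}\lceil\log n\rceil^2$, and $\gamma(\pi(S))\le \beta(\pi(S))\cdot\lceil\log n\rceil$, so $\beta(\pi(S))\ge 2\eps^{-1}\lceil\log n\rceil$, which is exactly condition (ii) of Lemma~\ref{le:addingtree}. Invoking that lemma gives
\[
	\alpha(\II) ~\ge~ \sum_{S\in\SS_{rel}} \hat\beta(S) ~\ge~ \left(\tfrac12-\eps\right)\alpha(\II),
\]
as desired. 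The main obstacle I anticipate is purely bookkeeping rather than conceptual: making precise what ``$S(r)$ is the disjoint union of $\SS_{rel}$'' means when some leaf segments contain no interval (so $\gamma=0$ there and they never become relevant), and checking that this does not harm either inequality — it does not, because both $\beta$ and $\hat\beta$ ignore empty regions, and Lemma~\ref{le:addingtree}'s proof only ever charges intervals of an optimal solution, all of which lie over nonempty segments. A careful statement would restrict attention to the union of segments in $\SS_{rel}$ with $\gamma\ge 1$ and note it contains every input interval.
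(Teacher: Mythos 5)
Your overall strategy --- reduce to Lemma~\ref{le:addingtree} by checking its two hypotheses via Lemma~\ref{le:gamma} --- is the same as the paper's, and both your treatment of the degenerate case $\SS_{rel}=\{S(r)\}$ and your verification of hypothesis (ii) are fine. The gap is in hypothesis (i). The set $\SS_{rel}$ by itself is \emph{not} a partition of $S(r)$: a segment $S$ with $\gamma(S)=0$ and $\gamma(\pi(S))\ge 2\eps^{-1}\lceil\log n\rceil^2$ has no relevant ancestor and no relevant descendant, so the region it occupies is uncovered. You propose to dismiss this because ``the union of the relevant segments contains every input interval,'' but that claim is false: an input interval $J$ whose smallest enclosing segment $S(J)$ already satisfies $\gamma(S(J))\ge 2\eps^{-1}\lceil\log n\rceil^2$ need not be contained in --- nor even intersect --- any relevant segment. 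For instance, if all the short intervals of $\II$ sit deep inside one child of $S(J)$ while $J$ straddles the midpoint of $S(J)$ on the other side, then $J$ lies entirely over zero-$\gamma$ segments. Such a $J$ can belong to $\JJ^*$, and the step of Lemma~\ref{le:addingtree}'s proof you are implicitly invoking, namely $|\JJ^*|\le\sum_{S\in\SS'}|\{J\in\JJ^*\mid J\cap S\ne\emptyset\}|$, genuinely requires every optimal interval to meet some segment of $\SS'$; it is exactly the ``$+2$ per segment'' slack there that absorbs straddling intervals, and your $J$ escapes it entirely.

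The paper closes this hole by augmenting the family: it sets $\SS_0=\{S\in\SS\setminus\{S(r)\}\mid\gamma(S)=0 \text{ and } \gamma(\pi(S))\ge 2\eps^{-1}\lceil\log n\rceil^2\}$ and applies Lemma~\ref{le:addingtree} to $\SS'=\SS_{rel}\cup\SS_0$, which (by the monotonicity argument you already gave, applied to \emph{every} elementary leaf rather than only the ``active'' ones) really is a partition of $S(r)$, and which still satisfies hypothesis (ii) since that condition only involves the parents. Because $\hat\beta(S)=0$ for every $S\in\SS_0$, the sum over $\SS'$ equals the sum over $\SS_{rel}$, and the stated inequalities follow. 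Your argument could be repaired along these lines (or by separately bounding the optimal intervals avoiding all relevant segments, charging each to a boundary between consecutive segments of $\SS_0$), but as written the reduction is not legitimate because hypothesis (i) of Lemma~\ref{le:addingtree} fails for $\SS_{rel}$ alone.
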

\begin{proof}
	If $\gamma(S(r))< 2 \eps^{-1}\lceil\log n\rceil^2$, then $\SS_{rel}=\{ S(r)\}$
    and the result is clear. Thus we can assume that 
    $\gamma(S(r))\ge 2 \eps^{-1}\lceil\log n\rceil^2$, which implies
    that $S(r)\notin \SS_{rel}$.

	Define 
	\[
		\SS_0 ~=~ \{ S\in \SS\setminus \{S(r)\} \mid \gamma(S)=0 \text{ and }
					\gamma(\pi(S))\ge 2 \eps^{-1}\lceil\log n\rceil^2 \}.
	\]
	First note that the segments of $\SS_{rel}\cup \SS_0$ form a disjoint union of $S(r)$.
	Indeed, for each elementary segment $[i,i+1)\in \SS$, there exists exactly
	one ancestor that is either relevant or in $\SS_0$. 
	Lemma~\ref{le:gamma}(ii), the definition of relevant segment, and the
	fact $\gamma(S(r))\ge 2 \eps^{-1}\lceil\log n\rceil^2$ imply that
	\[
		\forall S\in \SS_{rel}\cup \SS_0:~~~~	
		\beta(\pi(S)) ~\ge~ \gamma(\pi(S))/ \lceil\log n\rceil 
		~\ge~ 2 \eps^{-1}\lceil\log n\rceil.
	\]
	Therefore, the set $\SS' = \SS_{rel}\cup \SS_0$ satisfies the conditions of Lemma~\ref{le:addingtree}.
	Using that for all $S\in \SS_0$ we have $\gamma(S)=\hat\beta(S)=0$, 
	we obtain the claimed inequalities.	
\end{proof}

Let $N_{rel}$ be the number of relevant segments.
A segment $S\in \SS$ is \DEF{active} if $S=S(r)$ or 
its \emph{parent} contains some input interval.
See Figure~\ref{fig:active} for an example.
Let $N_{act}$ be the number of active segments in $\SS$.
We are going to estimate $N_{act}$, the ratio $N_{rel}/N_{act}$, 
and the average value of $\hat\beta(S)$ over the relevant segments $S\in \SS_{rel}$.
With this, we will be able to estimate the sum considered in Lemma~\ref{le:addingtree2}.
The next section describes how the estimations are obtained in the data streaming model.

\begin{figure}
	\centering
	\includegraphics[page=2]{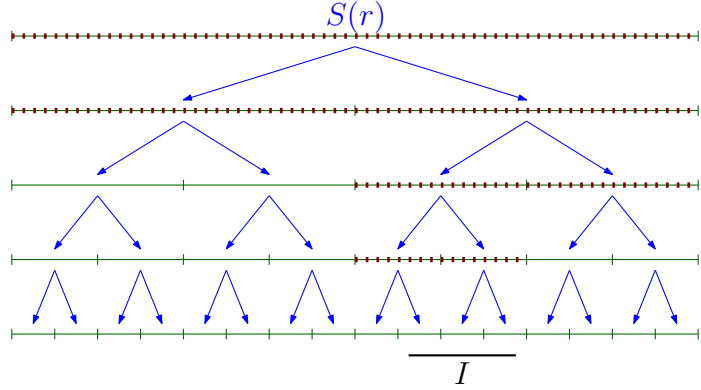}
	\caption{Active segments because of an interval $I$.}
	\label{fig:active}
\end{figure}

\subsection{Algorithms in the streaming model}

For each interval $I$, 
we use $\sigma_{\SS}(I)$ for the sequence of segments from $\SS$ 
that are active because of interval $I$, ordered \emph{non-increasingly} by size. 
Thus, $\sigma_{\SS}(I)$ contains $S(r)$ followed by
the segments whose parents contain $I$. The selected ordering implies that 
a parent $\pi(S)$ appears before $S$, for all $S$ in the sequence $\sigma_{\SS}(I)$. 
Note that $\sigma_{\SS}(I)$ has at most 
$2\lceil\log n \rceil$ elements because $T$ is balanced.

\begin{lemma}
\label{le:estimate-act}
	There is an algorithm in the data stream model that uses $O(\eps^{-2}+\log n)$ space
	and computes a value $\hat N_{act}$ such that
	\[
	\Pr \Bigl[ | N_{act}- \hat N_{act}| \le \eps \cdot N_{act} \Bigr]
		~\ge~ \frac{11}{12}.
	\] 
\end{lemma}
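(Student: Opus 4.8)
The plan is to recognize $N_{act}$ as (essentially) a count of distinct elements in a derived stream, and then invoke a standard streaming distinct-elements estimator. First I would observe that a segment $S\in\SS$ is active precisely when $S=S(r)$ or its parent $\pi(S)$ contains at least one input interval; equivalently, active segments other than the root are exactly the children (in $T$) of the segments in $\SS$ that contain some interval of $\II$. So, for each interval $I=[x,y]$ arriving in the stream, I would compute the $O(\log n)$ segments $S$ whose parent contains $I$ (this is the sequence $\sigma_{\SS}(I)$ introduced just before the lemma, minus its first element $S(r)$) and feed the identifiers of these segments into a distinct-count sketch. Each segment of $\SS$ can be encoded with $O(\log n)$ bits (e.g.\ by its left endpoint and its level in $T$), so the derived stream is over a universe of size $O(n)$, and the number of distinct elements seen equals $N_{act}-1$ (the root is always active and is added separately). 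Note this computation is exactly what property (iv)-style bookkeeping in Lemma~\ref{le:gamma} already shows is cheap per interval.

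Next I would apply a known $(1\pm\eps)$-approximation algorithm for the number of distinct elements in a data stream — for instance the one of Kane, Nelson and Woodruff~\cite{optimal}, cited earlier in the paper — tuned to relative error $\eps$ and failure probability $1/12$. That algorithm uses $O(\eps^{-2}+\log n)$ bits, which matches the claimed bound: the $\eps^{-2}$ comes from the sketch's accuracy requirement and the $\log n$ from storing a constant number of values in $[n]$ (equivalently, from the universe size and the hash-function seeds). Let $\hat D$ be the returned estimate; I set $\hat N_{act}=\hat D+1$. Since $N_{act}=D+1$ where $D$ is the true distinct count, and $|D-\hat D|\le\eps D\le\eps N_{act}$ with probability at least $11/12$, the desired bound $|N_{act}-\hat N_{act}|\le\eps N_{act}$ follows with the same probability.

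The only real subtlety — and the step I'd expect to need the most care — is making sure that producing $\sigma_{\SS}(I)$ online genuinely costs only $O(\log n)$ time and space per interval and does not secretly require storing past intervals: one must check that, given $I=[x,y]$, the set of segments whose parent contains $I$ is determined purely by $x$, $y$, and the fixed tree $T$ (it is the set of children, along the two root-to-leaf search paths for $x$ and for $y$, of those tree nodes whose segment contains $[x,y]$), and hence is computable from $I$ alone in $O(\log n)$ arithmetic operations. Once that is in place, everything else is a black-box invocation of the distinct-elements estimator, and the space bound $O(\eps^{-2}+\log n)$ and success probability $11/12$ are exactly what that primitive delivers. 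A minor point to state cleanly is that feeding duplicate segment-identifiers into the sketch (the same segment becomes active because of several different intervals) is harmless, since the distinct-count estimator is by design insensitive to multiplicities.
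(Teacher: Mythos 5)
Your proposal is correct and follows essentially the same route as the paper: both reduce $N_{act}$ to a distinct-elements count over the derived stream of segments $\sigma_{\SS}(I_1),\sigma_{\SS}(I_2),\dots$ and invoke the Kane--Nelson--Woodruff estimator with failure probability $1/12$ in $O(\eps^{-2}+\log n)$ space. The only (immaterial) difference is that the paper keeps $S(r)$ inside each $\sigma_{\SS}(I)$ so the distinct count equals $N_{act}$ directly, whereas you exclude the root and add $1$ afterwards.
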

\begin{proof}
	We estimate $N_{act}$ using, as a black box, known results to estimate
	the number of distinct elements in a data stream.
	The stream of intervals $\II=I_1,I_2,\dots $ defines a stream of segments
	$\sigma=\sigma_{\SS}(I_1), \sigma_{\SS}(I_2),\dots $ that is $O(\log n)$ times longer.
	The segments appearing in the stream $\sigma$ are
	precisely the active segments.

	We have reduced the problem to the problem of how many distinct elements
	appear in a stream of segments from $\SS$.
	The result of Kane, Nelson and Woodruff~\cite{optimal} for distinct
	elements uses $O(\eps^{-2}+\log |\SS|)= O(\eps^{-2}+\log n)$ space and 
	computes a value $\hat N_{act}$ such that
	\[
		\Pr \left[ (1-\eps)N_{act} \le \hat N_{act} \le (1+\eps)N_{act} \right]~\ge~ \frac{11}{12}.
	\]
	Note that, to process an interval of the stream $\II$, we have to process $O(\log n)$
	segments of $\SS$. \qedhere
\end{proof}	

\begin{lemma}
\label{le:estimate-rel}
	There is an algorithm in the data stream model that uses $O(\eps^{-4}\log^4 n)$ space
	and computes a value $\hat N_{rel}$ such that
	\[
	\Pr \Bigl[ | N_{rel}- \hat N_{rel}| \le \eps \cdot N_{rel} \Bigr]
		~\ge~ \frac{10}{12}.
	\] 
\end{lemma}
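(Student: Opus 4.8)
The plan is to estimate $N_{rel}$ by sampling active segments nearly uniformly (using the $\H$-random sampling machinery of Lemma~\ref{le:permutations}) and checking, for each sampled segment, whether it is relevant. The fraction of relevant segments among active segments is $N_{rel}/N_{act}$, so combining such an estimate with the estimate $\hat N_{act}$ from Lemma~\ref{le:estimate-act} yields an estimate of $N_{rel}$. First I would note that the stream of intervals $\II$ induces a stream $\sigma$ of active segments, as in the proof of Lemma~\ref{le:estimate-act}, and each distinct segment of $\SS$ appearing in $\sigma$ is an active segment. Using a permutation $h\in\H(|\SS|,\eps')$ with $\eps'=\Theta(\eps)$, I maintain an $\H$-random active segment $s$: this is the active segment minimizing $h(\cdot)$, and crucially it is encountered at its first appearance in $\sigma$, so from that moment on I can run the subroutine of Lemma~\ref{le:gamma}(iv) to compute $\gamma(s)$ in $O(\gamma(s))$ space, and simultaneously track $\gamma(\pi(s))$. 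Since $s$ is relevant iff $1\le\gamma(s)<2\eps^{-1}\lceil\log n\rceil^2$ and $\gamma(\pi(s))\ge 2\eps^{-1}\lceil\log n\rceil^2$, I can decide relevance of $s$ — except that the subroutine need only count up to $2\eps^{-1}\lceil\log n\rceil^2$ (if $\gamma(s)$ reaches that threshold, $s$ is not relevant and the counter can stop), so the space per sample is $O(\eps^{-1}\log^2 n)$.

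Next I would run $k=\Theta(\eps^{-2})$ such samples in parallel with independent permutations, obtaining active segments $s_1,\dots,s_k$ and indicator bits $Z_i=[\,s_i\text{ is relevant}\,]$. By $\eps'$-min-wise independence, $\Pr[s_i\in\SS_{rel}]$ lies within a $(1\pm O(\eps))$ factor of $N_{rel}/N_{act}$, so $\hat\rho=\frac1k\sum_i Z_i$ estimates $N_{rel}/N_{act}$. The catch is the usual one for relative-error estimation of a ratio by sampling: if $N_{rel}/N_{act}$ is tiny, $\Theta(\eps^{-2})$ samples do not give a $(1\pm\eps)$-relative estimate. I would handle this exactly as in standard distinct-elements-style arguments: observe that $N_{act}=O(\log n)\cdot N_{rel}$ is \emph{not} generally true, so instead I rely on the fact that for the final algorithm we only need $\hat N_{rel}$ to be accurate relative to $N_{rel}$, and a Chernoff bound on $\sum_i Z_i$ gives that, provided the expected number of relevant hits $k\cdot(N_{rel}/N_{act})$ is $\Omega(\eps^{-2})$, the estimate $\hat\rho$ is within $(1\pm\eps)$ of its mean with probability $\ge 11/12$. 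To guarantee enough relevant hits I increase $k$ to $\Theta(\eps^{-2}\cdot N_{act}/N_{rel})$; since $N_{act}/N_{rel}=O(\log^2 n/\eps)$ — which follows because each relevant segment lies on a root-to-leaf path of length $O(\log n)$ below a segment of $\gamma$-value $\ge 2\eps^{-1}\lceil\log n\rceil^2$, bounding $N_{act}$ against $N_{rel}$ by an argument parallel to equations~\eqref{eq1}--\eqref{eq2} — this is $k=O(\eps^{-3}\log^2 n)$ samples, each using $O(\eps^{-1}\log^2 n)$ space, for $O(\eps^{-4}\log^4 n)$ total. Setting $\hat N_{rel}=\hat\rho\cdot\hat N_{act}$ and composing the two $(1\pm\eps)$-approximations (rescaling $\eps$ by a constant and adjusting min-wise parameters), a union bound over the two failure events of probability $\le 1/12$ each gives success probability $\ge 10/12$, as claimed.

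The main obstacle is making the ratio estimate have \emph{relative} error $\eps$ rather than additive error, which forces the sample count $k$ to scale with $N_{act}/N_{rel}$; the technical heart is therefore proving the deterministic bound $N_{act}=O(\eps^{-1}\log^2 n)\cdot N_{rel}$ (equivalently, that one cannot have many active segments without many relevant ones), which I would extract from Lemma~\ref{le:gamma}(i)--(ii) together with the balanced-tree structure, mirroring the path-counting in the proof of Lemma~\ref{le:addingtree}. A secondary subtlety is that the number of samples $k$ itself depends on the unknown quantities $N_{act}$ and $N_{rel}$; I would resolve this by a standard geometric-search / doubling strategy over guesses for $N_{rel}$ (run in parallel, at a $O(\log n)$-factor cost in space that is absorbed into the stated bound), or equivalently by noting that $k=O(\eps^{-3}\log^2 n)$ suffices unconditionally once the deterministic ratio bound is in hand. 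Everything else — maintaining the $\H$-random segments, running the $\gamma$-counting subroutine of Lemma~\ref{le:gamma}(iv) capped at the relevance threshold, and the final Chernoff and union-bound bookkeeping — is routine.
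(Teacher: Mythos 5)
Your proposal follows essentially the same route as the paper: estimate $N_{act}$ via Lemma~\ref{le:estimate-act}, sample active segments with independent min-wise permutations (deciding relevance by running the $\gamma$-counter of Lemma~\ref{le:gamma}(iv) capped at the relevance threshold, for $O(\eps^{-1}\log^2 n)$ space per sample), and use the deterministic bound $N_{rel}/N_{act}\ge\Omega(\eps/\log^2 n)$ — which the paper proves by charging at most $2\gamma(S')+2\lceil\log n\rceil\le 6\eps^{-1}\lceil\log n\rceil^2$ active segments to each relevant segment $S'$ — to conclude that $k=\Theta(\eps^{-3}\log^2 n)$ samples suffice unconditionally, giving the stated $O(\eps^{-4}\log^4 n)$ space. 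The only cosmetic differences are that the paper uses Chebyshev rather than Chernoff for the concentration step and needs no doubling strategy, as you yourself note.
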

\begin{proof}
	The idea is the following.
	We estimate $N_{act}$ by $\hat N_{act}$ using Lemma~\ref{le:estimate-act}.
	We take a sample of active segments, and count how many of them are relevant. 
	To get a representative sample,
	it will be important to use a lower bound on $N_{rel}/N_{act}$.
	With this we can estimate $N_{rel}= (N_{rel}/N_{act}) \cdot N_{act}$ accurately.
	We next provide the details.
	
	In $T$, each relevant segment $S'\in \SS_{rel}$ has
	$2\gamma(S')< 4 \eps^{-1}\lceil\log n\rceil^2$ active segments below it
	and at most $2\lceil\log n\rceil$ active segments whose parent is an ancestor of $S'$.
	This means that for each relevant segment there are at most 
	\[
		4 \eps^{-1}\lceil\log n\rceil^2 + 2\lceil\log n\rceil ~\le~ 6 \eps^{-1}\lceil\log n\rceil^2
	\]
	active segments.
	We obtain that
	\begin{equation}\label{eq:estimate1}
		\frac{N_{rel}}{N_{act}}~\ge~ \frac{1}{6 \eps^{-1}\lceil\log n\rceil^2}
		~=~ \frac{\eps}{6 \lceil\log n\rceil^2}.
	\end{equation}
	
	Fix any injective mapping $b$ between $\SS$ and $[n^2]$ that can be easily computed.
	For example, for each segment $S=[x,y)$ we may take $b(S)=n(x-1)+(y-1)$.
	Consider a family $\H=\H(n^2,\eps)$ of permutations $[n^2] \rightarrow [n^2]$ guaranteed
	by Lemma~\ref{le:permutations}. 
	For each $h\in \H$, the function $h\circ b$ gives an order among the elements 
	of $\SS$. We use them to compute $\H$-random samples among the active segments.
	
	Set $k=\lceil 72 \lceil\log n \rceil^2 /(\eps^3(1-\eps)]\rceil=\Theta(\eps^{-3}\log^2 n)$, and choose
	permutations $h_1,\dots,h_k\in \H$ uniformly and independently at random.
	For each permutation $h_j$, where $j=1,\dots, k$, let $S_j$ be the 
	\emph{active} segment of $\SS$ that minimizes $(h_j\circ b)(\cdot)$. Thus
	\[
		S_j ~=~ \arg \min \Bigl\{ h_j(S)\mid S\in \SS 
			\text{ is active} \Bigr\}.
	\]
	The idea is that $S_j$ is nearly a random active segment of $\SS$.
	Therefore, if we define the random variable
	\[
		X= \bigl|\{ j\in \{1,\dots, k\}\mid S_j\text{ is relevant}\}\bigr|
	\]
	then $N_{rel}/N_{act}$ is roughly $X/k$. Below we discuss the computation of $X$.		
	To analyze the random variable $X$ more precisely,
	let us define
	\[
		p ~=~ \Pr_{h_j\in\H}[S_j \text{ is relevant} ].
	\]
	Since $S_j$ is selected among the active segments,
	the discussion after Lemma~\ref{le:permutations} implies
	\begin{equation}
		\label{eq:estimate2}
		p ~\in~ 
		\left[ \frac{(1-\eps) N_{rel}}{N_{act}}, \frac{(1+\eps) N_{rel}}{N_{act}}\right].
	\end{equation}
	In particular, using the estimate \eqref{eq:estimate1} and the definition of $k$ we get
	\begin{equation}
		\label{eq:estimate3}
			kp ~\ge~ 
			\frac{72 \lceil\log n \rceil^2}{\eps^3(1-\eps)}\cdot \frac{(1-\eps) N_{rel}}{N_{act}} 
			~\ge~ 
			\frac{72 \lceil\log n \rceil^2}{\eps^3} \cdot \frac{\eps}{6 \lceil\log n\rceil^2}
			~=~
			\frac{12}{\eps^2}.
	\end{equation}
	Note that $X$ is the sum of $k$ independent random variables 
	taking values in $\{ 0,1\}$ and $\mathbb{E}[X]=kp$.
	It follows from Chebyshev's inequality and the lower bound in~\eqref{eq:estimate3}
	that
	\begin{align*}
		\Pr\left[ \left| \frac Xk- p \right| \ge \eps p \right] ~&=~
		\Pr\Bigl[ | X- kp| \ge \eps kp \Bigr] ~~\le~ \frac{\Var[X]}{(\eps k p)^2} \\
		&=~ \frac{kp(1-p)}{(\eps kp)^2} ~<~ \frac{1}{kp \eps^2}
		~\le~ \frac{1}{12}.
	\end{align*}
	
	To finalize, let us define the estimator 
	$\hat N_{rel} = \hat N_{act} \cdot \left( \frac{X}{k} \right)$ of $N_{rel}$,
	where $\hat N_{act}$ is the estimator of $N_{act}$ given in Lemma~\ref{le:estimate-act}.
	When the events
	\[
		\left[ | N_{act}-\hat N_{act}| \le \eps N_{act} \right] 
		~~~\text{and}~~~
		\left[ \left| \frac Xk- p \right| \le \eps p \right]
	\]
	occur, then we can use equation~\eqref{eq:estimate2} and $\eps<1/2$ to see that
	\begin{align*}
		\hat N_{rel} ~&\le~ (1+\eps)N_{act}\cdot (1+\eps) p ~\le~
		(1+\eps)^2 N_{act}\cdot  \frac{(1+\eps) N_{rel}}{N_{act}}
		~=~ (1+\eps)^3 N_{rel} \\ &\le (1+7\eps)N_{rel},
	\end{align*}
	and also
	\begin{align*}
		\hat N_{rel} ~&\ge~ (1-\eps)N_{act}  \cdot (1-\eps) p ~\ge~
		(1-\eps)^2 N_{act} \cdot \frac{(1-\eps) N_{rel}}{N_{act}}
		~=~ (1-\eps)^3 N_{rel}\\
		&\ge~ (1-7\eps) N_{rel}.
	\end{align*}
	We conclude that
	\begin{align*}
		\Pr\Bigl[ (1-7\eps) N_{rel} \le \hat N_{rel} \le (1+7\eps) N_{rel} \Bigr]
		~&\ge~ 1-\Pr \left[ | N_{act}-\hat N_{act}| \ge \eps N_{act} \right] -
			\Pr \left[ \left| \frac Xk- p \right| \ge \eps p \right] \\
		~&\ge~ 1-\frac{1}{12}-\frac{1}{12} ~\ge~ \frac{10}{12}.
	\end{align*}
	Replacing in the argument $\eps$ by $\eps/7$, we obtain the desired bound.

	It remains to discuss how $X$ can be computed.
	For each $j$, where $j=1,\dots, k$,
	we keep a variable that stores the current segment $S_j$ for
	all the segments that are active so far, 
    keep information about the choice of $h_j$,
    and keep information
	about $\gamma(S_j)$ and $\gamma(\pi(S_j))$, so that we can decide whether $S_j$
	is relevant. 

	Let $I_1,I_2,\dots $ be the data stream of input intervals. 
	We consider the stream of segments
	$\sigma=\sigma_\SS(I_1), \sigma_\SS(I_2),\dots $.
	When handling a segment $S$ of the stream $\sigma$,
	we may have to update $S_j$; this happens when $h_j(S) < h_j(S_j)$. 
	Note that we can indeed maintain $\gamma(\pi(S_j))$
	because $S_j$ becomes active the first time that its parent contains
	some input interval. This is also the first time when $\gamma(\pi(S_j))$
	becomes nonzero, and thus the forthcoming part of the stream has enough
	information to compute $\gamma(S_j)$ and $\gamma(\pi(S_j))$.	
	(Here it is convenient that $\sigma_\SS(I)$ gives segments in decreasing size.)
	To maintain $\gamma(S_j)$ and $\gamma(\pi(S_j))$, we use Lemma~\ref{le:gamma}(iv).
	
	To reduce the space used by each index $j$, we use the following simple trick.
	If at some point we detect that $\gamma(S_j)$ is larger than 
	$2 \eps^{-1}\lceil\log n\rceil^2$, we just store that $S_j$ is not relevant.
	If at some point we detect that $\gamma(\pi(S_j))$ is larger than 
	$2 \eps^{-1}\lceil\log n\rceil^2$, we just store that $\pi(S_j)$ is large enough
	that $S_j$ could be relevant. 	
	We conclude that, for each $j$, we need at most $O(\log(1/\eps)+\eps^{-1}\log^2 n)$ space.
	Therefore, we need in total
	$O(k \eps^{-1}\log^2 n)=O(\eps^{-4}\log^4 n)$ space.
\qedhere
\end{proof}

Let 
\[
	\rho ~=~ \left(\sum_{S\in \SS_{rel}} \hat\beta(S)\right) / |\SS_{rel}|.
\]
The next result shows how to estimate $\rho$.

\begin{lemma}
\label{le:estimate-contribution}
	There is an algorithm in the data stream model that uses $O(\eps^{-5}\log^6 n)$ space
	and computes a value $\hat \rho$ such that
	\[
	\Pr \Bigl[ | \rho- \hat \rho| \le \eps \rho \Bigr]
		~\ge~ \frac{10}{12}.
	\] 
\end{lemma}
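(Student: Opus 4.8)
The plan is to estimate $\rho$ directly as a near-uniform sample average over $\SS_{rel}$, using the conditional $\H$-sampling described before Lemma~\ref{le:conditional}. Exactly as in the proof of Lemma~\ref{le:estimate-rel}, fix the injective map $b\colon\SS\to[n^2]$, take a family $\H=\H(n^2,\eps)$ from Lemma~\ref{le:permutations}, pick $k=\Theta(\eps^{-4}\log^4 n)$ permutations $h_1,\dots,h_k\in\H$ independently and uniformly at random, and for each $j$ let $S_j$ be the active segment minimizing $h_j\circ b$. Set $A=\{\,j\mid S_j\in\SS_{rel}\,\}$ and $m=|A|$; for each $j\in A$ also compute $\hat\beta(S_j)$ by running the algorithm of Section~\ref{sec:lis} on the intervals contained in $S_j$, and output $\hat\rho=\tfrac1m\sum_{j\in A}\hat\beta(S_j)$ (and, say, $\hat\rho=0$ when $m=0$). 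First dispose of the degenerate case $\SS_{rel}=\{S(r)\}$, which is equivalent to $\gamma(S(r))<2\eps^{-1}\lceil\log n\rceil^2$ (by monotonicity of $\gamma$ up the tree) and which we detect by maintaining $\gamma(S(r))$ capped at that value: then $\rho=\hat\beta(S(r))$ is computed exactly in $O(\eps^{-1}\log^2 n)$ space. So from now on assume $\gamma(S(r))\ge 2\eps^{-1}\lceil\log n\rceil^2$; in either case $\rho\ge1$.

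There are two quantities to control. First, enough of the samples must be relevant: recall from the proof of Lemma~\ref{le:estimate-rel} that $N_{rel}/N_{act}\ge\eps/(6\lceil\log n\rceil^2)$, so by $\eps$-min-wise independence $p:=\Pr[S_j\in\SS_{rel}]\ge(1-\eps)\eps/(6\lceil\log n\rceil^2)$, and with the constant of $k$ chosen suitably $\mathbb{E}[m]=kp=\Omega(\eps^{-3}\log^2 n)$; since $m$ is a sum of $k$ independent indicators, Chebyshev's inequality gives $m\ge kp/2$ except with probability $\le1/12$. Second, by Lemma~\ref{le:conditional} (applied with $n^2$ in place of $n$), conditioned on $S_j\in\SS_{rel}$ the segment $S_j$ is within a $(1\pm4\eps)$ factor of uniform over $\SS_{rel}$, hence $\mu:=\mathbb{E}[\hat\beta(S_j)\mid S_j\in\SS_{rel}]\in[(1-4\eps)\rho,(1+4\eps)\rho]$. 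The crucial point is that for a relevant segment $S$ we have $1\le\hat\beta(S)\le\beta(S)\le\gamma(S)<2\eps^{-1}\lceil\log n\rceil^2=:M$, by Lemma~\ref{le:gamma}(iii) and the definition of relevance; so every summand of $\hat\rho$ lies in $[1,M]$, and, conditioned on $A$, the summands are i.i.d.\ with mean $\mu$ and variance at most $M\mu$, whence $\Var[\hat\rho\mid A]\le M\mu/m$. Using $\rho\ge1$ and $\mu\le3\rho$, Chebyshev gives $\Pr[\,|\hat\rho-\mu|\ge\eps\rho\mid m\ge kp/2\,]\le6M/(kp\eps^2)\le1/12$ for the chosen $k$, and on that event $|\hat\rho-\rho|\le|\hat\rho-\mu|+|\mu-\rho|\le\eps\rho+4\eps\rho=5\eps\rho$. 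A union bound yields $|\hat\rho-\rho|\le5\eps\rho$ with probability at least $10/12$, and replacing $\eps$ by $\eps/5$ gives the claim.

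For the implementation, proceed as in the proof of Lemma~\ref{le:estimate-rel}, now additionally maintaining, for each index $j$, the data structure of Section~\ref{sec:lis} fed with the intervals contained in $S_j$. As observed there, a segment first enters the stream $\sigma=\sigma_\SS(I_1),\sigma_\SS(I_2),\dots$ precisely when it becomes active, i.e.\ when its parent first contains an input interval, and $S_j$ can only switch to a segment at that segment's first appearance in $\sigma$; since every interval contained in $S_j$ is also contained in $\pi(S_j)$, all such intervals arrive after $S_j$ has entered $\sigma$, so the forthcoming portion of the stream suffices both to track $\gamma(S_j)$ and $\gamma(\pi(S_j))$ (Lemma~\ref{le:gamma}(iv)), hence to decide relevance, and to run the Section~\ref{sec:lis} algorithm on $S_j$, hence to obtain $\hat\beta(S_j)$. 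We truncate as before: once $\gamma(S_j)\ge M$ we mark $S_j$ not relevant and discard its structures, and once $\gamma(\pi(S_j))$ reaches $2\eps^{-1}\lceil\log n\rceil^2$ we merely record that; while $S_j$ is still a candidate, the Section~\ref{sec:lis} structure has at most $\hat\beta(S_j)\le\gamma(S_j)<M$ windows. Thus each index costs $O(\log(1/\eps)+\eps^{-1}\log^2 n)$ space and the total is $O(k\cdot\eps^{-1}\log^2 n)=O(\eps^{-5}\log^6 n)$. (These $k$ samples can in fact be shared with those of Lemma~\ref{le:estimate-rel}.) I expect the main obstacle to be the second error bound: estimating $\Var[\hat\beta(S_j)\mid S_j\in\SS_{rel}]$ by the trivial $M^2$ would force $k=\Theta(\eps^{-5}\log^6 n)$ and hence space $\Theta(\eps^{-6}\log^8 n)$; it is essential to exploit the range $[1,M]$ together with $\mu=\Theta(\rho)=\Omega(1)$ to get $\Var\le M\mu$ and keep $k=\Theta(\eps^{-4}\log^4 n)$.
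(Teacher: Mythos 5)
Your proposal is correct and follows essentially the same route as the paper: conditional sampling of relevant segments via the min-wise independent permutations, the crucial variance bound $\Var[\hat\beta(S_j)\mid S_j\in\SS_{rel}]\le M\mu$ coming from $\hat\beta(S)\le\gamma(S)<2\eps^{-1}\lceil\log n\rceil^2$ on relevant segments, Chebyshev with $k=\Theta(\eps^{-4}\log^4 n)$ samples, and the same truncation trick for the space bound. The only (harmless) deviations are that you average over all relevant samples rather than the first $k$ relevant ones among $k_0$ draws, and that you explicitly treat the degenerate case $\SS_{rel}=\{S(r)\}$, which the paper leaves implicit.
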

\begin{proof}
	Fix any injective mapping $b$ between $\SS$ and $[n^2]$ that can be easily computed,
	and consider a family $\H=\H(n^2,\eps)$ of permutations $[n^2] \rightarrow [n^2]$ guaranteed
	by Lemma~\ref{le:permutations}. 
	For each $h\in \H$, the function $h\circ b$ gives an order among the elements 
	of $\SS$. We use them to compute $\H$-random samples among the active segments.

	Let $\SS_{act}$ be the set of active segments.
    
	Consider a random variable $Y_1$ defined as follows. 
    We repeatedly sample $h_1\in \H$ uniformly at random,
    until we get that $\arg\min_{S\in \SS_{act}} \hat\beta(S)$ is a relevant segment.
	Let $S_1$ be the resulting relevant segment $\arg\min_{S\in \SS_{act}} \hat\beta(S)$,
	and set $Y_1=\hat\beta(S_1)$.
    Because of Lemma~\ref{le:conditional}, where $X=\SS_{act}$ and $Y=\SS_{rel}$, we have
    \[
    	\forall S\in \SS_{rel}:~~~~ \frac{1-4\eps}{|\SS_{rel}|} ~\le~ 
        			\Pr [S_1=S]~\le~ \frac{1+4\eps}{|\SS_{rel}|}.
    \]	
	We thus have
	\[
		\mathbb{E}[Y_1] ~=~ \sum_{S\in \SS_{rel}} \Pr[S_1=S]\cdot \hat\beta(S) ~\le\
		\sum_{S\in \SS_{rel}} \frac{1+4\eps}{|\SS_{rel}|}\cdot \hat\beta(S) ~=~
		(1+4\eps) \cdot \rho,
	\]
	and similarly
	\[
		\mathbb{E}[Y_1] ~\ge~
		\sum_{S\in \SS_{rel}} \frac{1-4\eps}{|\SS_{rel}|}\cdot \hat\beta(S) ~=~
		(1-4\eps) \cdot \rho.
	\]
	For the variance we can use $\hat \beta(S)\le \gamma(S)$ and 
    the definition of relevant segments to get
	\begin{align*}
		\Var[Y_1] ~ &\le ~ \mathbb{E}[Y_1^2] \\
        &=~ 
		\sum_{S\in \SS_{rel}} \Pr[S_1=S]\cdot (\hat\beta(S))^2 \\
		&\le~ 
		\sum_{S\in \SS_{rel}} \frac{1+4\eps}{|\SS_{rel}|}\cdot \hat\beta(S) \cdot  
				\frac{2 \lceil\log n\rceil^2}{\eps} \\
		&\le~
		\frac{2(1+4\eps) \lceil\log n\rceil^2}{\eps} \cdot \rho \\
		&\le~ \frac{6 \lceil\log n\rceil^2}{\eps} \cdot \rho
	\end{align*}
	Note also that $\gamma(S)\ge 1$ implies $\hat\beta(S)\ge 1$.
	Therefore, $\rho\ge 1$.

	Consider an integer $k$ to be chosen later.
	Let $Y_2,\dots, Y_k$ be independent random variables with the same distribution that $Y_1$,
	and define $\hat\rho=(\sum^k_{i=1} Y_i)/k$.
	Using Chebyshev's inequality and $\rho\ge 1$ we obtain
	\begin{align*}
		\Pr\left[ \left| \hat\rho - \mathbb{E}[Y_1] \right| \ge \eps \rho \right] ~&=~
		\Pr\Bigl[ | \hat\rho k - \mathbb{E}[Y_1]k| \ge \eps k\rho \Bigr] 
		~\le~ \frac{\Var[\hat\rho k]}{(\eps k \rho)^2} \\
		&=~ \frac{k\Var[Y_1]}{(\eps k\rho)^2} 
		~<~ \frac{\frac{6 \lceil\log n\rceil^2}{\eps} \cdot \rho}{k \eps^2 \rho^2} \\
		&\le~ \frac{6 \lceil\log n\rceil^2}{k\eps^3}
	\end{align*}
	Setting $k=6\cdot 12\cdot\lceil\log n\rceil^2/\eps^3$,
	we have 
	\begin{align*}
		\Pr\left[ \left| \hat\rho - \mathbb{E}[Y_1] \right| \ge \eps \rho \right] 
		~\le~ \frac{1}{12}.
	\end{align*}

	We then proceed similar to the proof of Lemma~\ref{le:estimate-rel}.
	Set $k_0= 12 \lceil\log n\rceil^2 k/\eps(1-\eps) = \Theta(\eps^{-4}\log^4 n)$.
	For each $j\in [k_0]$, take a function $h_j\in \H$ uniformly at random
	and select $S_j=\arg\min \{ h(b(S))\mid S \text{ is active}  \}$.
	Let $X$ be the number of relevant segments in $S_1,\dots, S_{k_0}$
	and let $p=\Pr[S_1 \in \SS_{rel}]$.
	Using the analysis of Lemma~\ref{le:estimate-rel} we have
	\begin{equation*}
			k_0 p ~\ge~ 
			\frac{(12 \lceil\log n \rceil^2) k}{\eps(1-\eps)}\cdot \frac{(1-\eps)N_{rel}}{N_{act}} 
			~\ge~ 
			k\cdot \frac{12 \lceil\log n \rceil^2}{\eps(1-\eps)} \cdot \frac{(1-\eps)\eps}{6 \lceil\log n\rceil^2}
			~=~
			2k
	\end{equation*}
	and
	\[
		\Pr\Bigl[ | X- k_0 p| \ge k_0 p/2 \Bigr] ~\le~ \frac{\Var[X]}{(k_0 p/2)^2}
		~=~ \frac{4 k_0 p(1-p)}{k_0^2p^2}
		~<~ \frac{4}{k_0 p}
		~\le~ \frac{4}{2k}
		~\le~ \frac{1}{12}.
	\]	
	This means that, with probability at least $11/12$, the sample
	$S_1,\dots, S_{k_0}$ contains at least $(1/2)k_0p \ge k$ relevant segments.
	We can then use the first $k$ of those relevant segments to compute the estimate $\hat\rho$.
	
	With probability at least $1-1/12-1/12=10/12$ we have the events
	\[
		\Bigl[ | X- k_0 p| \ge k_0 p/2 \Bigr]
		~~~\text{ and }~~~
		\left[ \left| \hat\rho - \mathbb{E}[Y_1] \right| \ge \eps \rho \right].
	\] 	
	In such a case
	\[
		\hat\rho ~\le~ \eps\rho + \mathbb{E}[Y_1] ~\le~ \eps\rho + (1+4\eps)\rho ~=~ (1+5\eps)\rho
	\]
	and similarly
	\[
		\hat\rho ~\ge~ \mathbb{E}[Y_1] -\eps\rho ~\ge~ (1-4\eps)\rho - \eps\rho ~=~ (1-5\eps)\rho.
	\]
	Therefore,
	\[
	\Pr \Bigl[ | \rho- \hat \rho| \le 5\eps \rho \Bigr]
		~\ge~ \frac{10}{12}.
	\]
	Changing the role of $\eps$ and $\eps/5$, the claimed probability is obtained.
	
	It remains to show that we can compute $\hat\rho$ in the data stream model.
	Like before, for each $j\in [k_0]$, we have to maintain	the segment $S_j$, 
    information about the choice of the permutation $h_j$,
    information about $\gamma(S_j)$ and $\gamma(\pi(S_j))$,
	and the value $\hat\beta(S_j)$. Since $\hat\beta(S_j)\le \beta(S_j)\le \gamma(S_j)$
	because of Lemma~\ref{le:gamma}(iii), we need 
	$O(\eps^{-1}\log^2 n)$ space per index $j$. 
	In total we need $O(k_0\eps^{-1}\log^2 n)= O(\eps^{-5}\log^6 n)$ space.
	\qedhere
\end{proof}

\begin{theorem}
	Assume that $\eps\in (0,1/2)$ and 
    let $\II$ be a set of intervals with endpoints in $\{1,\dots, n\}$ 
    that arrive in a data stream.
	There is a data stream algorithm that uses 
    $O(\eps^{-5}\log^6 n)$ space and computes a value $\hat\alpha$
	such that
	\[
	\Pr \Bigl[ \tfrac{1}{2}\left(1-\eps\right)\cdot \alpha(\II) \le 
				\hat\alpha \le 
				\alpha(\II) \Bigr]
		~\ge~ \frac 23.
	\] 
\end{theorem}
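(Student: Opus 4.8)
The plan is to assemble the pieces built in this section. By Lemma~\ref{le:addingtree2}, the quantity $\Sigma:=\sum_{S\in\SS_{rel}}\hat\beta(S)$ already satisfies $(\tfrac12-\eps)\,\alpha(\II)\le\Sigma\le\alpha(\II)$, so it suffices to estimate $\Sigma$ up to a $(1\pm O(\eps))$ multiplicative factor and then shrink the estimate a little to restore the one-sided guarantee. Writing $N_{rel}=|\SS_{rel}|$ and $\rho=\Sigma/|\SS_{rel}|$, we have the factorization $\Sigma=N_{rel}\cdot\rho$, and I would estimate the two factors separately.

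Concretely, I would run on the input stream, in parallel and with independent random coins, the algorithm of Lemma~\ref{le:estimate-rel} to obtain $\hat N_{rel}$ and the algorithm of Lemma~\ref{le:estimate-contribution} to obtain $\hat\rho$; both subroutines process the stream in which each interval $I$ spawns the $O(\log n)$-segment substream $\sigma_{\SS}(I)$, so they can be executed concurrently, for a total of $O(\eps^{-4}\log^4 n)+O(\eps^{-5}\log^6 n)=O(\eps^{-5}\log^6 n)$ space. The algorithm then outputs
\[
  \hat\alpha~=~\frac{\hat N_{rel}\cdot\hat\rho}{(1+\eps)^2}.
\]
Call the \emph{good event} the event that simultaneously $\hat N_{rel}\in[(1-\eps)N_{rel},(1+\eps)N_{rel}]$ and $\hat\rho\in[(1-\eps)\rho,(1+\eps)\rho]$; by a union bound it fails with probability at most $\tfrac{2}{12}+\tfrac{2}{12}=\tfrac13$.

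Under the good event, $\hat N_{rel}\hat\rho\in[(1-\eps)^2,(1+\eps)^2]\cdot\Sigma$. Dividing by $(1+\eps)^2$ and using $\Sigma\le\alpha(\II)$ yields $\hat\alpha\le\Sigma\le\alpha(\II)$, while using instead $\Sigma\ge(\tfrac12-\eps)\alpha(\II)$ and $1/(1+\eps)\ge 1-\eps$ yields
\[
  \hat\alpha~\ge~\frac{(1-\eps)^2}{(1+\eps)^2}\Bigl(\tfrac12-\eps\Bigr)\alpha(\II)~\ge~\tfrac12(1-\eps)^4(1-2\eps)\,\alpha(\II)~\ge~\tfrac12(1-6\eps)\,\alpha(\II).
\]
Running the whole construction with $\eps$ replaced by $\eps/6$ then gives $\tfrac12(1-\eps)\alpha(\II)\le\hat\alpha\le\alpha(\II)$ with probability at least $\tfrac23$; since $\eps/6$ differs from $\eps$ by a fixed constant, the space remains $O(\eps^{-5}\log^6 n)$.

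I expect the main difficulty to be bookkeeping rather than conceptual. Two points deserve care: first, one must not output $\hat N_{rel}\hat\rho$ directly, since the product of two upward-biased estimates can exceed $\alpha(\II)$ --- hence the deliberate shrink by the fixed factor $(1+\eps)^{-2}$; second, one must track how the multiplicative slacks from Lemmas~\ref{le:addingtree2},~\ref{le:estimate-rel} and~\ref{le:estimate-contribution} compound, so that the final substitution $\eps\mapsto\eps/6$ is legitimate and only the constant hidden in the space bound is affected. Everything else is a routine expansion of $(1\pm\eps)$ factors.
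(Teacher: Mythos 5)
Your proposal is correct and matches the paper's own proof essentially step for step: both form $\hat\alpha_0=\hat N_{rel}\cdot\hat\rho$ from Lemmas~\ref{le:estimate-rel} and~\ref{le:estimate-contribution}, apply a union bound to get the $2/3$ success probability, deflate by $(1+\eps)^2$ to restore the one-sided upper bound via Lemma~\ref{le:addingtree2}, and rescale $\eps$ by a factor of $6$. The algebraic bookkeeping is also equivalent (the paper writes the lower-bound constant as $\tfrac12-3\eps$, which is your $\tfrac12(1-6\eps)$), so no further comment is needed.
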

\begin{proof}
	We compute the estimate $\hat N_{rel}$ of Lemma~\ref{le:estimate-rel}
	and the estimate $\hat \rho$ of Lemma~\ref{le:estimate-contribution}.
	Define the estimate $\hat \alpha_0 = \hat N_{rel}\cdot \hat \rho$.
	With probability at least $1-\tfrac{2}{12}-\tfrac{2}{12}=\tfrac{2}{3}$
    we have the events
	\[
		\Bigl[ | N_{rel}- \hat N_{rel}| \le \eps \cdot N_{rel} \Bigr]
		~~~\text{ and }~~~
		\Bigl[ | \rho- \hat \rho| \le \eps \rho \Bigr].
	\]
	When such events hold, we can use the definitions of $N_{rel}$ and $\rho$, 
	together with Lemma~\ref{le:addingtree2}, to obtain
	\[
		\hat\alpha_0 ~\le~ (1+\eps) N_{rel}\cdot (1+\eps) \rho ~=~ 
			(1+\eps)^2 \sum_{S\in \SS_{rel}} \hat\beta(S) ~\le~
			(1+\eps)^2 \alpha(\II)
	\]
	and 
	\[
		\hat\alpha_0 ~\ge~ (1-\eps) N_{rel}\cdot (1-\eps) \rho ~=~ 
			(1-\eps)^2 \sum_{S\in \SS_{rel}} \hat\beta(S) ~\ge~
			(1-\eps)^2 \left(\frac{1}{2}-\eps\right) \alpha(\II).
			%~\ge~ \left(\frac{1}{2}-3\eps\right) \alpha(\II).
	\]
	Therefore, 
	\[
	\Pr \left[ (1-\eps)^2 \left(\frac{1}{2}-\eps\right)\cdot \alpha(\II) \le 
				\hat\alpha_0 \le 
				(1+\eps)^2 \cdot \alpha(\II) \right]
		~\ge~ \frac 23.
	\] 
	Using that $(1-\eps)^2 (1/2-\eps)/(1+\eps)^2 \ge 1/2-3\eps$ for all $\eps\in(0,1/2)$,
	rescaling $\eps$ by $1/6$, and setting $\hat\alpha=\hat\alpha_0/(1+\eps)^2$,
    the claimed approximation is obtained. 
	The space bounds are those from Lemmas~\ref{le:estimate-rel} and~\ref{le:estimate-contribution}.
\end{proof}

%%%%%%%%%%%%%%%%%%%%%%%%%%%%%%%%%%%%%%%%%%%%%%%%%%%%%%%%%%%%%%%%%%%%%%%%%%%%%%%%%%%%%%%%%%%%%%%55
\section{Largest independent set of same-size intervals}
\label{sec:lisu}
In this section we show how to obtain a $(3/2)$-approximation to the largest independent
set using $O(\alpha(\II))$ space in the special case when all the intervals have the same
length $\lambda>0$. 

Our approach is based on using the shifting technique of Hochbaum and Mass~\cite{hm-85} 
with a grid of length $3\lambda$ and shifts of length $\lambda$. We observe that we can maintain an optimal solution
restricted to a window of length $3\lambda$ because at most two disjoint 
intervals of length $\lambda$ can fit in.

For any real value $\ell$, let $W_\ell$ denote the window $[\ell,\ell+3\lambda)$. 
Note that $W_\ell$ includes the left endpoint but excludes the right endpoint.
For $a\in \{ 0,1,2 \}$, we define the partition of the real line
\[
	\WW_a ~=~ \Bigl\{ W_{(a+3j)\lambda} \mid j\in \ZZ \Bigr\}.
\]
For $a\in \{ 0,1,2 \}$, let $\II_a$ be the set of input intervals contained in some window of $\WW_a$. Thus,	
\[
	\II_a ~=~ \Bigl\{ I\in \II \mid \not\exists j\in \ZZ \text{ s.t. } (a+3j)\lambda \in I \Bigr\}.
\]

\begin{lemma}
\label{le:32}
	If all the intervals of $\II$ have length $\lambda>0$, then
	\[
		\max \Bigl\{ \alpha(\II_0), \alpha(\II_1), \alpha(\II_2) \Bigr\} ~\ge~ \frac 23 \alpha(\II).
	\]
\end{lemma}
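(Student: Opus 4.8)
The plan is to show that the three grids $\WW_0,\WW_1,\WW_2$ together "cover" an optimal solution $\JJ^*$ of $\II$ in a way that loses at most a third of the intervals on average. Let $\JJ^*\subseteq \II$ be an independent set with $|\JJ^*|=\alpha(\II)$. For each $a\in\{0,1,2\}$, let $B_a\subseteq \JJ^*$ be the set of intervals of $\JJ^*$ that are \emph{not} contained in any window of $\WW_a$; equivalently, $B_a$ consists of those $I\in\JJ^*$ that contain some grid point of the form $(a+3j)\lambda$. Then $\JJ^*\setminus B_a\subseteq \II_a$ is an independent set, so $\alpha(\II_a)\ge |\JJ^*|-|B_a|=\alpha(\II)-|B_a|$.

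The key observation is that each $I\in\JJ^*$ lies in at most one of the three sets $B_0,B_1,B_2$. Indeed, $I$ has length $\lambda$, and the grid points of all three families $\WW_0,\WW_1,\WW_2$ together form exactly the set $\{k\lambda\mid k\in\ZZ\}$, consecutive points being at distance $\lambda$; a half-open interval of length $\lambda$ contains exactly one point of $\{k\lambda\mid k\in\ZZ\}$ (or, depending on endpoint conventions, at most one in its interior — here I would just invoke that the windows are half-open of length $3\lambda$ so that $I$ can fail to be inside a window of $\WW_a$ only by straddling exactly one point $(a+3j)\lambda$, and those points for distinct $a$ are distinct). Consequently $B_0,B_1,B_2$ are pairwise disjoint subsets of $\JJ^*$, whence
\[
	|B_0|+|B_1|+|B_2| ~\le~ |\JJ^*| ~=~ \alpha(\II).
\]

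Summing the three inequalities $\alpha(\II_a)\ge \alpha(\II)-|B_a|$ over $a\in\{0,1,2\}$ gives
\[
	\alpha(\II_0)+\alpha(\II_1)+\alpha(\II_2) ~\ge~ 3\alpha(\II) - \bigl(|B_0|+|B_1|+|B_2|\bigr) ~\ge~ 3\alpha(\II)-\alpha(\II) ~=~ 2\alpha(\II),
\]
so the maximum of the three terms is at least $\tfrac23\alpha(\II)$, as claimed.

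I expect the only delicate point to be the endpoint bookkeeping in the claim that $B_0,B_1,B_2$ are disjoint: one must be careful that an interval of length exactly $\lambda$ with an endpoint sitting on a grid point of $\{k\lambda\}$ is "counted" consistently — it is contained in the window to one side and straddles no grid point in the relevant open sense, so it contributes to at most one $B_a$. Since the windows $W_\ell=[\ell,\ell+3\lambda)$ and hence the failure condition "$(a+3j)\lambda\in I$" are fixed by the definitions above, this is a routine case check on whether $I$ is closed, half-open, etc., and the adaptation remarked on in the Preliminaries handles mixed types. Everything else is the averaging argument displayed above.
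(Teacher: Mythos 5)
Your proof is correct and is essentially the paper's own argument in complementary form: the paper counts that each length-$\lambda$ interval is contained in windows of exactly two of the three shifted partitions, so $\sum_a |\JJ^*\cap\II_a|\ge 2|\JJ^*|$, which is precisely your statement that each interval of $\JJ^*$ lies in at most one of $B_0,B_1,B_2$. The endpoint caveat you flag (an interval $[k\lambda,(k+1)\lambda]$ touching two grid points) is real but harmless under the window-containment reading, and the paper glosses over it in the same way.
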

\begin{proof}
	Each interval of length $\lambda$ is contained in exactly two windows of 
	$\WW_0\cup \WW_1\cup \WW_2$. Let $\JJ^* \subseteq \II$ be a largest independent set
	of intervals, so that $|\JJ^*| =\alpha (\II)$. We then have
	\[
    	3 \cdot \max \Bigl\{ \alpha(\II_0), \alpha(\II_1), \alpha(\II_2) \Bigr\} ~\ge~
		\sum_{0\le a \le 2} \alpha(\II_a) ~\ge~ 
		\sum_{0\le a \le 2} |\JJ^*\cap \II_a| ~\ge~
		2 |\JJ^*| ~=~
		2 \alpha(\II)
	\]
	and the result follows.\qedhere
\end{proof}

For each $a\in \{0,1,2\}$ we store an optimal solution $\JJ_a$ restricted to $\II_a$.
We obtain a $(3/2)$-approximation by returning the largest among $\JJ_0, \JJ_1,\JJ_2$.

For each window $W$ considered through the algorithm,
we store $\leftmost(W)$ and $\rightmost(W)$. 
We also store a boolean value $\Active(W)$ telling whether some previous interval was contained
in $W$. When $\Active(W)$ is false, $\leftmost(W)$ and $\rightmost(W)$ are undefined.

With a few local operations, we can handle the insertion of new intervals in $\II$.
For a window $W\in \WW_a$, there are two relevant moments when $\JJ_a$ may be changed. 
First, when $W$ gets the first interval,
the interval has to be added to $\JJ_a$ and
$\Active(W)$ is set to true. 
Second, when $W$ can first fit two disjoint intervals,
then those two intervals have to be added to $\JJ_a$.  
See the pseudocode in Figure~\ref{fig:code2} for a more detailed description.

\begin{figure}
	\begin{process}{}{interval $[x,y]$ of length $\lambda$}
		\qfor $a=0,1,2$ \qdo\\
			$W \qlet$ window of $\WW_a$ that contains $x$ \\
			\qif $y \in W$ \qthen ~~~(* $[x,y]$	is contained in the window $W\in \WW_a$ *) \\
				\qif $\Active(W)$ false \qthen\\
					$active(W) \qlet$ true\\
					$\rightmost(W)\qlet [x,y]$ \\
					$\leftmost(W)\qlet [x,y]$ \\
					add $[x,y]$ to $\JJ_a$
				\qelse\qif $\rightmost(W)\cap \leftmost(W) \neq \emptyset$ \qthen\\
					$[\ell,r] \qlet \rightmost(W)\cap \leftmost(W)$\\
					\qif $\ell < x$ \qthen $\rightmost(W)\qlet [x,y]$ \qfi \\
					\qif $y < r$ \qthen $\leftmost(W)\qlet [x,y]$ \qfi \\
					\qif $\rightmost(W)\cap \leftmost(W) = \emptyset$ \qthen\\
						remove from $\JJ_a$ the interval contained in $W$\\
						add to $\JJ_a$ intervals $\rightmost(W)$ and $\leftmost(W)$\qfi
			\qfi
		\qfi
	\end{process}
	\caption{Policy to process a new interval $[x,x+\lambda]$. 
		$\JJ_a$ maintains an optimal solution for $\alpha(\II_a)$.}
	\label{fig:code2}
\end{figure}

\begin{lemma}
	For $a=0,1,2$, 
	the policy described in Figure~\ref{fig:code2} maintains an optimal solution
	$\JJ_a$ for the intervals $\II_a$.
\end{lemma}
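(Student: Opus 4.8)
The plan is to argue that the algorithm maintains, for each $a\in\{0,1,2\}$ and each window $W\in\WW_a$, a correct local picture of the intervals of $\II_a$ contained in $W$ seen so far, and then to lift per-window optimality to global optimality using the fact that $\WW_a$ partitions the real line. First, a structural observation independent of the algorithm: since $\WW_a$ partitions $\RR$ into windows of length $3\lambda$, every interval of $\II_a$ is contained in exactly one window of $\WW_a$, and at most two pairwise disjoint closed intervals of length $\lambda$ fit inside one window of length $3\lambda$ (three disjoint ones would span strictly more than $3\lambda$). Writing $\II_a(W)=\{I\in\II_a\mid I\subset W\}$, this gives $\alpha(\II_a(W))\in\{0,1,2\}$, and by splitting any optimal solution according to the window containing each of its intervals we get $\alpha(\II_a)=\sum_{W\in\WW_a}\alpha(\II_a(W))$; conversely a union over $W$ of independent solutions for the $\II_a(W)$ is independent because distinct windows are disjoint. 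Hence it suffices to show that, at every moment, the intervals of $\JJ_a$ lying inside a window $W$ form an optimal solution for the intervals of $\II_a(W)$ seen so far.

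I would prove this by induction on the processed prefix of the stream, maintaining the invariant: for each window $W\in\WW_a$, (i) $\Active(W)$ is true exactly when $\II_a(W)\neq\emptyset$; (ii) if $\alpha(\II_a(W))\le 1$, then when $\II_a(W)\neq\emptyset$ the intervals $\leftmost(W)$ and $\rightmost(W)$ are genuinely the leftmost and rightmost intervals of $\II_a(W)$, they intersect, and $\JJ_a$ contains exactly one interval inside $W$; (iii) if $\alpha(\II_a(W))=2$, then $\leftmost(W)$ and $\rightmost(W)$ are two disjoint intervals of $\II_a(W)$ and $\JJ_a$ contains exactly these two intervals inside $W$. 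The base case (no window active) is trivial. For the inductive step a new interval $[x,y]$ of length $\lambda$ is processed; for each $a$ we locate the window $W\in\WW_a$ containing $x$. If $y\notin W$, then $[x,y]$ straddles a grid point $(a+3j)\lambda$, so $[x,y]\notin\II_a$, the set $\II_a(W)$ is unchanged, and the algorithm correctly does nothing. Otherwise $[x,y]\in\II_a(W)$ and there are three cases: $W$ inactive (local optimum goes from $0$ to $1$: set both extremes to $[x,y]$ and add it to $\JJ_a$); $W$ active with $\leftmost(W)\cap\rightmost(W)\neq\emptyset$; and $W$ active with $\leftmost(W)\cap\rightmost(W)=\emptyset$ (local optimum is already $2$ and can never grow, so nothing to do). In the middle case the update rules replace $\rightmost(W)$ by $[x,y]$ exactly when $[x,y]$ has strictly larger left endpoint than the current rightmost, and $\leftmost(W)$ by $[x,y]$ exactly when $[x,y]$ has strictly smaller right endpoint than the current leftmost; for same-length intervals equal right (resp.\ left) endpoints force equal intervals, so no tie-breaking subtlety arises and $\leftmost(W),\rightmost(W)$ stay correct.

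The crux --- and the step I expect to need the most care --- is showing that in this middle case the updated intersection $\leftmost(W)\cap\rightmost(W)$ becomes empty if and only if $\alpha(\II_a(W)\cup\{[x,y]\})=2$, so that the emptiness test in the pseudocode triggers exactly when it should. One direction is immediate: disjoint updated extremes exhibit two disjoint intervals, so $\alpha=2$. For the other direction, if the updated set contains two disjoint intervals $I_1$ (to the left) and $I_2$ (to the right), then the right endpoint of the updated $\leftmost(W)$ is at most that of $I_1$, which is strictly less than the left endpoint of $I_2$, which is at most that of the updated $\rightmost(W)$; hence the updated extremes are disjoint. In that subcase the algorithm removes the single interval previously stored for $W$ and inserts $\leftmost(W)$ and $\rightmost(W)$, restoring invariant~(iii); otherwise invariant~(ii) is restored. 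To conclude, $\JJ_a$ is independent because inside each window it is either one interval or two disjoint intervals and distinct windows of $\WW_a$ are disjoint, and $|\JJ_a|=\sum_{W\in\WW_a}\alpha(\II_a(W))=\alpha(\II_a)$, so $\JJ_a$ is an optimal solution for $\II_a$.
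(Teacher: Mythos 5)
Your proof is correct and follows essentially the same route as the paper's: per-window optimality via $\leftmost(W)$ and $\rightmost(W)$ together with the fact that at most two disjoint length-$\lambda$ intervals fit in a length-$3\lambda$ window, lifted to global optimality because $\WW_a$ partitions the line. The paper asserts the invariant maintenance in one sentence, whereas you verify it by explicit induction (including the correct equivalence between the emptiness test and $\alpha(\II_a(W))$ reaching $2$), which is a welcome filling-in of detail rather than a different argument.
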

\begin{proof}
	Since a window $W$ of length $3$ can contain at most 2 disjoint intervals
	of length $\lambda$, the intervals 
	$\rightmost(W)$ and $\leftmost(W)$ suffice to obtain an optimal solution
	restricted to intervals contained in $W$. By the definition of $\II_a$,
	an optimal solution for 
    \[
    	\bigcup_{W\in \WW_a} \{ \rightmost(W), \leftmost(W)\}
    \]
	is an optimal solution for $\II_a$. Since the algorithm maintains such
	an optimal solution, the claim follows.\qedhere
\end{proof}

Since each window can have at most two disjoint intervals and each interval
is contained in at most two windows of $\WW_0\cup \WW_1\cup \WW_2$,
we have at most $O(\alpha(\II))$ active intervals through the entire stream.
Using a dynamic binary search tree for the active windows, we can perform the operations
in $O(\log \alpha(\II))$ time. %(Using hashing and assuming that the floor function takes constant time, we can also achieve randomized $O(1)$ time.) 
We summarize.

\begin{theorem}
	Let $\II$ be a set of intervals of length $\lambda$ 
	in the real line that arrive in a data stream.
	There is a data stream algorithm to compute a $(3/2)$-approximation to the largest 
	independent subset of $\II$ 
	that uses $O(\alpha(\II))$ space and handles each
	interval of the stream in $O(\log \alpha(\II))$ time.
\end{theorem}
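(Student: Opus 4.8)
The plan is to run, in parallel, three instances of the policy of Figure~\ref{fig:code2}, one for each shift $a\in\{0,1,2\}$, and to output whichever of the maintained solutions $\JJ_0,\JJ_1,\JJ_2$ is largest. Note first that the common length $\lambda$ is learned from the first interval of the stream, so the partitions $\WW_a$ and the sets $\II_a$ are well defined as soon as processing begins. By the lemma certifying the policy of Figure~\ref{fig:code2}, for each $a$ the set $\JJ_a$ is at all times an optimal independent subset of $\II_a$, so $|\JJ_a|=\alpha(\II_a)$; and since $\II_a\subseteq\II$, the set $\JJ_a$ is an independent subset of $\II$ as well. Hence the output is feasible for $\II$ and has cardinality $\max\{\alpha(\II_0),\alpha(\II_1),\alpha(\II_2)\}$, which by Lemma~\ref{le:32} is at least $\tfrac23\,\alpha(\II)$. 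This is exactly a $(3/2)$-approximation.

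The substantive part is the resource analysis, and the point to get right is that we only ever store information for \emph{active} windows. A window that never contains an input interval contributes nothing, and the three windows containing a given endpoint $x$ (one per shift) are computed from $x$ by arithmetic on $\lambda$ and $a$, without ever enumerating the grid. For a fixed $a$, every active window of $\WW_a$ contains at least one input interval and therefore holds at least one interval of $\JJ_a$ at every later time; since distinct windows of $\WW_a$ are disjoint, these representatives are distinct, so the number of active windows of $\WW_a$ is at most $|\JJ_a|\le\alpha(\II)$. Summing over $a\in\{0,1,2\}$ there are $O(\alpha(\II))$ active windows in total, each storing only $\leftmost(W)$, $\rightmost(W)$, and the boolean $\Active(W)$; together with the structures realizing $\JJ_0,\JJ_1,\JJ_2$ (of total size $O(\alpha(\II))$) this gives $O(\alpha(\II))$ space.

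For the time bound I would keep, for each $a$, the active windows of $\WW_a$ in a dynamic balanced binary search tree keyed by position, and keep each $\JJ_a$ in a structure supporting insertion, deletion, and size queries in $O(\log\alpha(\II))$ time. Processing an interval $[x,y]$ then costs, for each of the three shifts, one search/insertion into a tree with $O(\alpha(\II))$ nodes plus the constant amount of bookkeeping on $\leftmost$, $\rightmost$, $\Active$ and $\JJ_a$ spelled out in Figure~\ref{fig:code2}; this is $O(\log\alpha(\II))$ per interval. Since correctness and the approximation ratio are already delivered by Lemma~\ref{le:32} and the policy lemma, the only step needing genuine care is this accounting — specifically the claim that the number of active windows is $O(\alpha(\II))$ and not merely $O(n)$ — which is what keeps the whole algorithm within $O(\alpha(\II))$ space.
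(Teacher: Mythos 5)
Your proposal is correct and follows the paper's own route: run the policy of Figure~\ref{fig:code2} for the three shifts, invoke the policy lemma and Lemma~\ref{le:32} for the $3/2$ ratio, and bound space by counting active windows (your observation that each active window of $\WW_a$ permanently holds a representative in $\JJ_a$, giving at most $|\JJ_a|\le\alpha(\II)$ active windows per shift, is a clean justification of the paper's terser counting). The balanced-search-tree implementation for the $O(\log\alpha(\II))$ per-interval time also matches the paper.
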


%%%%%%%%%%%%%%%%%%%%%%%%%%%%%%%%%%%%%%%%%%%%%%%%%%%%%%%%%%%%%%%%%%%%%%%%%%%%%%%%%%%%%%%%%%%%%%55
\section{Size of largest independent set for same-size intervals}
\label{sec:slisu}

In this section we show how to obtain a randomized estimate
of the value $\alpha(\II)$ in the special case when all the intervals have the same length $\lambda>0$. We assume that the endpoints are in $[n]$.

The idea is an extension of the idea used in Section~\ref{sec:lisu}.
For $a=0,1,2$, let $\WW_a$ and $\II_a$ be as defined in 
Section~\ref{sec:lisu}. For $a=0,1,2$, we will compute
a value $\hat\alpha_a$ that $(1+\eps)$-approximates $\alpha(\II_a)$ 
with reasonable probability.
We then return $\hat \alpha=\max \{ \hat\alpha_0, \hat\alpha_1, \hat\alpha_2 \}$,
which catches a fraction at least $\tfrac{2}{3}(1-\eps)$ of $\alpha(\II)$,
with reasonable probability.

To obtain the $(1+\eps)$-approximation to $\alpha(\II_a)$,
we want to estimate how many windows of $\WW_a$ contain some input interval
and how many contain two disjoint input intervals. For
this we combine known results for the distinct elements as a black box
and use sampling over the windows that contain some input interval. 

\begin{lemma}\label{le:alpha}
	Let $a$ be $0$, $1$ or $2$ and let $\eps\in (0,1)$.
	There is an algorithm in the data stream model that uses 
	$O(\eps^{-2}\log(1/\eps)+\log n)$ space
	and computes a value $\hat\alpha_a$ such that
	\[
	\Pr \Bigl[ |\alpha(\II_a)- \hat\alpha_a| \le \eps \cdot \alpha(\II_a) \Bigr]
		~\ge~ \frac{8}{9}.
	\] 
\end{lemma}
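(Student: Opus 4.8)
The plan is to reduce the estimation of $\alpha(\II_a)$ to two distinct-elements
counting problems run in parallel, exploiting the structural fact (used already in
Section~\ref{sec:lisu}) that every window $W\in\WW_a$ contains at most two disjoint
input intervals, so $\alpha(\II_a)$ is determined by two quantities: the number
$D_a$ of windows of $\WW_a$ that contain at least one input interval, and the number
$P_a$ of windows of $\WW_a$ that contain two disjoint input intervals. Indeed, since
an optimal solution for $\II_a$ takes exactly two intervals from each window of the
second type and exactly one from each window of the first-but-not-second type, we have
$\alpha(\II_a) = D_a + P_a$. So the task is to estimate $D_a$ and $P_a$ each to within
a $(1\pm\eps')$ factor for a suitable $\eps'=\Theta(\eps)$, and then output
$\hat\alpha_a = \hat D_a + \hat P_a$; if both estimates are good then
$|\hat\alpha_a-\alpha(\II_a)| \le \eps'(D_a+P_a) = \eps'\alpha(\II_a)$.

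First I would set up the streams. As each input interval $[x,x+\lambda]$ arrives,
compute the unique window $W\in\WW_a$ containing $x$ (an $O(1)$ arithmetic computation
since $\WW_a$ is an explicit arithmetic grid) and check whether $[x,x+\lambda]\subset W$;
if so, emit the identifier of $W$ into a first substream. This substream has the set of
distinct elements equal to exactly the $D_a$ windows containing an input interval, and
the window identifiers lie in a range of size $O(n)$, so feeding it to the optimal
distinct-elements sketch of Kane, Nelson and Woodruff~\cite{optimal} gives
$\hat D_a$ with $\Pr[(1-\eps')D_a\le \hat D_a\le(1+\eps')D_a]\ge 1-\delta$ using
$O(\eps'^{-2}\log(1/\delta)+\log n)$ space. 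For $P_a$ I would run, in parallel, the
2-approximation/exact-window machinery of Section~\ref{sec:lisu}: maintain for each
currently-active window the boolean $\Active$, and $\leftmost(W)$, $\rightmost(W)$ — but
we cannot store all active windows. Instead, the moment a window $W$ first comes to
contain two disjoint intervals is detected locally from $\leftmost(W)\cap\rightmost(W)$
becoming empty, exactly as in Figure~\ref{fig:code2}; at that instant emit the identifier
of $W$ into a second substream. The distinct elements of this second substream are
precisely the $P_a$ windows of the second type, so a second copy of the distinct-elements
sketch yields $\hat P_a$ with the analogous guarantee. The only subtlety is that deciding
when to emit into the second substream seems to require storing per-window state for every
active window; to avoid this, note that we only need to emit $W$ once, at the first time it
holds two disjoint intervals, and this event is decidable from $\leftmost(W)$ and
$\rightmost(W)$ restricted to the $O(1)$ most recently seen intervals that landed in $W$
together with one retained ``oldest surviving'' interval — but this still needs memory per
active window. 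The clean fix, which I expect the authors use, is to observe that $P_a$ can
itself be read off a distinct-elements count on a cleverly chosen derived stream: for each
arriving interval contained in a window $W$, emit the pair (identifier of $W$, side) where
``side'' records which half of $W$ the interval's relevant endpoint falls in, or more
precisely emit enough of a fingerprint that the number of windows witnessing two disjoint
intervals becomes the count of some explicitly describable subfamily — however this needs
care, and I would instead just emit $W$'s identifier the first time two disjoint intervals
coexist, accepting that we detect this using the same local test as Figure~\ref{fig:code2}
but \emph{without} storing the intermediate windows: once two disjoint intervals in $W$
have both been seen, $W$ stays ``doubly covered'' forever, so it suffices to emit $W$ when
we first see an interval in $W$ disjoint from some previously seen interval in $W$, and for
that it is enough to remember, for each window ever touched, one interval — which again is
$\Theta(D_a)$ memory. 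The resolution is that we do not need to detect it online at all:
$P_a$ equals the number of windows $W$ for which there exist two input intervals
$I,I'\subset W$ with $I\cap I'=\emptyset$, and two disjoint length-$\lambda$ intervals fit
in a length-$3\lambda$ window iff one lies in the left $\lambda$-portion and one in the
right $\lambda$-portion; hence $P_a = |\{W : \exists I\subset W\text{ in left third}\} \cap
\{W : \exists I\subset W\text{ in right third}\}|$. Running two more distinct-elements
sketches (one per "third") and an inclusion–exclusion-free argument does not directly give
the size of an intersection, so instead I would estimate $P_a$ by the count of the derived
stream that emits $W$ with a tag that is ``left'' or ``right'' and track, via a single
pass, windows that have received \emph{both} tags — which once more is per-window state.

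Given these difficulties, the actual main obstacle — and the step I would spend the most
effort on — is showing $P_a$ is itself approximable in $O(\eps^{-2}\log(1/\eps)+\log n)$
space. I believe the intended route is: emit into the $P$-substream the identifier of a
window $W$ at the first moment the streaming algorithm of Section~\ref{sec:lisu} \emph{for
that window} would flip it to doubly-covered, but to bound memory we exploit that we only
need an \emph{estimate} of $P_a$, so we sample: pick a hash $h\in\H(n,\eps)$ (Lemma~\ref{le:permutations})
and track only the window $W^\star$ minimizing $h$ among active windows, along with its
$\leftmost,\rightmost$ — no, that gives one sample, not a count. The correct and simplest
resolution, which I will adopt, is: observe $\alpha(\II_a) = 2D_a - (\text{number of
windows containing exactly one maximal ``singleton'' worth})$, equivalently $\alpha(\II_a)=
D_a+P_a$ with $P_a \le D_a$, and estimate $P_a/D_a$ by $\H$-random sampling over the
distinct active windows exactly as in Lemma~\ref{le:estimate-rel}: maintain $k=\Theta(\eps^{-2}\log(1/\eps))$
independent $\H$-random active windows $W_1,\dots,W_k$ (each is the arg-min of an
independent hash over identifiers emitted into the first substream, and for each we can
afford to store $\leftmost(W_j),\rightmost(W_j)$ since a window holds at most two disjoint
intervals so this is $O(1)$ per sample), count how many $W_j$ are doubly covered, multiply
the resulting ratio by $\hat D_a$. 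Correctness then follows from $\eps$-min-wise independence
giving each active window sampling probability in $[(1-\eps)/D_a,(1+\eps)/D_a]$, a Chebyshev
bound on the count of doubly-covered samples, and a union bound with the $\hat D_a$ event;
the space is $O(k\log n)$ for the samples plus $O(\eps^{-2}+\log n)$ for the distinct-elements
sketch, i.e. $O(\eps^{-2}\log(1/\eps)+\log n)$ as claimed; rescaling $\eps$ by a constant
and boosting each sub-event to probability $\ge 17/18$ yields the overall $8/9$. The part
needing genuine care is precisely the sampling analysis for $P_a/D_a$ — ensuring the
$\H$-random window really is almost uniform over \emph{distinct} active windows (which is
the content of the discussion following Lemma~\ref{le:permutations}, since a window emitted
many times still counts once) and that $O(1)$ bookkeeping per sample truly suffices to
decide ``doubly covered'', which it does because at most two disjoint length-$\lambda$
intervals fit in a length-$3\lambda$ window.
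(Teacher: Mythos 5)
Your final adopted scheme is exactly the paper's proof: write $\alpha(\II_a)=\gamma_1+\gamma_2$ (your $D_a+P_a$), estimate $\gamma_1$ with the Kane--Nelson--Woodruff distinct-elements sketch on the derived window stream, estimate the fraction of doubly-covered windows by $\Theta(\eps^{-2})$ independent $\H$-random active windows each carrying its own $\leftmost/\rightmost$ pair, and combine with a union bound. The one detail to state precisely is that Chebyshev should be applied to get the \emph{additive} bound $|M/k-p|\le\eps$ rather than a relative $(1\pm\eps)$ estimate of $P_a$ (which would fail when $P_a\ll D_a$); since the resulting additive error $\eps\hat\gamma_1$ is at most $\eps(1+\eps)\alpha(\II_a)$, this suffices and is exactly how the paper closes the argument.
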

\begin{proof}
	Let us fix some $a\in \{0,1,2 \}$.
	We say that a window $W$ of $\WW_a$
	is of type $i$ if $W$ contains \emph{at least} $i$ disjoint input intervals. 
	Since the windows of $\WW_a$ have length $3\lambda$, they
	can be of type $0$, $1$ or $2$. 
	For $i=0,1,2$, let $\gamma_i$ be the number of windows of type $i$ in $\WW_a$.
	Then $\alpha(\II_a)= \gamma_1+\gamma_2$.
	
	We compute an estimate $\hat \gamma_1$ to $\gamma_1$ as follows.
	We have to estimate $| \{ W\in \WW_a\mid \exists I \text{ s.t. } I\subset W\}|$.
	The stream of intervals $\II=I_1,I_2,\dots $ defines the sequence of windows
	$W(\II)= W(I_1), W(I_2),\dots $, 
	where $W(I_i)$ denotes the window of $\WW_a$ that contains $I_i$;
	if $I_i$ is not contained in any window of $\WW_a$, we then skip $I_i$.
	Then $\gamma_1$ is the number of distinct elements in the sequence $W(\II)$.
	The results of Kane, Nelson and Woodruff~\cite{optimal} imply that
	using $O(\eps^{-2}+\log n)$ space we can compute a value $\hat\gamma_1$ such that
	\[
		\Pr \left[ (1-\eps)\gamma_1 \le \hat\gamma_1 \le (1+\eps)\gamma_1 \right]~\ge~ \frac{17}{18}.
	\]
	
	We next explain how to estimate the ratio $\gamma_2/\gamma_1\le 1$.
	Consider a family $\H=\H(n,\eps)$ of permutations $[n]\rightarrow [n]$ guaranteed
	by Lemma~\ref{le:permutations}, set $k=\lceil 18\eps^{-2}\rceil$, and choose
	permutations $h_1,\dots,h_k\in \H$ uniformly and independently at random.
	For each permutation $h_j$, where $j=1,\dots, k$, let $W_j$
	be the window $[\ell,\ell+3\lambda)$ of $\WW_a$ that contains some input interval
	and minimizes $h_j(\ell)$. Thus
	\[
		W_j ~=~ \arg \min \Bigl\{ h_j(\ell)\mid [\ell ,\ell +3\lambda)\in \WW_a, 
			\text{ some $I\in \II$ is contained in $[\ell ,\ell +3\lambda)$} \Bigr\}.
	\]
	The idea is that $W_j$ is a nearly-uniform random window of $\WW_a$, among
	those that contain some input interval. 
	Therefore, if we define the random variable
	\[
		M= \bigl|\{ j\in \{1,\dots, k\}\mid W_j\text{ is of type 2}\}\bigr|
	\]
	then $\gamma_2/\gamma_1$ is roughly $M/k$. Below we make a precise analysis.
	
	Let us first discuss that $M$ can be computed in 
	space within $O(k \log(1/\eps))=O(\eps^{-2}\log(1/\eps))$.
	For each $j$, where $j=1,\dots, k$,
	we keep information about the choice of $h_j$,
	keep a variable that stores the current window $W_j$ for
	all the intervals that have been seen so far,
	and store the intervals $\rightmost(W_j)$ and $\leftmost(W_j)$. 
	Those two intervals tell us whether $W_j$ is of
	type $1$ or $2$. When handling an interval $I$ of the stream,
	we may have to update $W_j$; this happens when $h_j(s) < h_j(s_j)$,
	where $s$ is the left endpoint of the window of $\WW_a$ that contains $I$
	and $s_j$ is the left endpoint of $W_j$. When $W_j$ is changed, 
	we also have to reset the intervals
	$\rightmost(W_j)$ and $\leftmost(W_j)$ to the new interval $I$.

	To analyze the random variable $M$ more precisely,
	let us define
	\[
		p ~=~ \Pr_{h_j\in\H}[W_j \text{ is of type $2$} ] ~\in~ 
		\left[ \tfrac{(1-\eps) \gamma_2}{\gamma_1}, \tfrac{(1+\eps) \gamma_2}{\gamma_1}\right].
	\]
	Note that $M$ is the sum of $k$ independent random variables 
	taking values in $\{ 0,1\}$ and $\mathbb{E}[M]=kp$.
	It follows from Chebyshev's inequality that
	\[
		\Pr\left[ \left| \frac Mk- p \right| \ge \eps \right] ~=~
		\Pr\Bigl[ | M- kp| \ge \eps k \Bigr] ~\le~
		\frac{\Var[M]}{(\eps k)^2}~\le~
		\frac{kp}{\eps^2 k^2} ~\le~
		\frac{1}{\eps^2 k} ~\le~ \frac{1}{18}.
	\]
	
	To finalize, let us define the estimator 
	$\hat \alpha_a = \hat\gamma_1 \left( 1+ \frac{M}{k} \right)$.
	When the events
	\[
		\left[ (1-\eps)\gamma_1 \le \hat\gamma_1 \le (1+\eps)\gamma_1 \right] 
		~~~\text{and}~~~
		\left[ \left| \frac Mk- p \right| \le \eps \right]
	\]
	occur, then we have
	\begin{align*}
		\hat\alpha_a ~&\le~ (1+\eps)\gamma_1  \left( 1+ p \right) ~\le~
		(1+\eps)\gamma_1  \left( 1+\eps+ \frac{(1+\eps) \gamma_2}{\gamma_1} \right)\\
		&=~ (1+\eps)^2 (\gamma_1+\gamma_2) ~\le~ (1+3\eps)\alpha(\II_a),
	\end{align*}
	and also
	\begin{align*}
		\hat\alpha_a ~&\ge~ (1-\eps)\gamma_1  \left( 1-\eps+ p \right) ~\ge~
		(1-\eps)\gamma_1  \left( 1-\eps + \frac{(1-\eps) \gamma_2}{\gamma_1} \right)\\
		&=~ (1-\eps)^2 (\gamma_1+\gamma_2) ~\ge~ (1-3\eps)\alpha(\II_a).
	\end{align*}
	We conclude that
	\[
		\Pr\Bigl[ (1-3\eps)\alpha(\II_a) \le \hat\alpha_a \le (1+3\eps)\alpha(\II_a) \Bigr]
			~\ge~ 1-\frac{1}{18}-\frac{1}{18} ~\ge~ \frac{8}{9}.
	\]
	Replacing in the argument $\eps$ by $\eps/3$, the result follows.\qedhere
\end{proof}

\begin{theorem}
	Assume that $\eps\in (0,1/2)$ and let $\II$ be a set of intervals of length $\lambda$ 
	with endpoints in $\{1,\dots, n\}$ that arrive in a data stream.
	There is a data stream algorithm that uses  
    $O(\eps^{-2}\log(1/\eps)+\log n)$ space
	and computes a value $\hat\alpha$
	such that
	\[
	\Pr \Bigl[ \tfrac 23 (1-\eps)\cdot \alpha(\II) \le 
				\hat\alpha \le 
				\alpha(\II) \Bigr]
		~\ge~ \frac 23.
	\] 
\end{theorem}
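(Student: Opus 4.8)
The plan is to run, in parallel, three instances of the algorithm of Lemma~\ref{le:alpha}, one for each shift $a\in\{0,1,2\}$, obtaining estimates $\hat\alpha_0,\hat\alpha_1,\hat\alpha_2$ of $\alpha(\II_0),\alpha(\II_1),\alpha(\II_2)$, and then to return a suitably rescaled version of $\max\{\hat\alpha_0,\hat\alpha_1,\hat\alpha_2\}$. The common length $\lambda$ is part of the input (or is revealed by the first interval of the stream, since all intervals have length $\lambda$), so the partitions $\WW_a$ and the test ``which window of $\WW_a$ contains a given interval'' are available to each instance; this is all the extra information that Lemma~\ref{le:alpha} needs beyond the stream itself.

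For correctness, Lemma~\ref{le:alpha} guarantees that each instance satisfies $|\alpha(\II_a)-\hat\alpha_a|\le\eps\,\alpha(\II_a)$ with probability at least $8/9$, so by a union bound all three estimates are good simultaneously with probability at least $1-3\cdot\tfrac19=\tfrac23$. Condition on this good event. Since $\II_a\subseteq\II$ we have $\alpha(\II_a)\le\alpha(\II)$, hence $\hat\alpha_a\le(1+\eps)\alpha(\II_a)\le(1+\eps)\alpha(\II)$ for each $a$, and therefore $\max_a\hat\alpha_a\le(1+\eps)\alpha(\II)$. In the other direction, Lemma~\ref{le:32} provides an index $a^*$ with $\alpha(\II_{a^*})\ge\tfrac23\alpha(\II)$, so $\max_a\hat\alpha_a\ge\hat\alpha_{a^*}\ge(1-\eps)\alpha(\II_{a^*})\ge\tfrac23(1-\eps)\alpha(\II)$.

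Finally I would set $\hat\alpha=\max\{\hat\alpha_0,\hat\alpha_1,\hat\alpha_2\}/(1+\eps)$. Then, on the good event, $\hat\alpha\le\alpha(\II)$ and
\[
	\hat\alpha ~\ge~ \frac{1-\eps}{1+\eps}\cdot\frac23\,\alpha(\II) ~\ge~ (1-2\eps)\cdot\frac23\,\alpha(\II),
\]
using $\tfrac{1-\eps}{1+\eps}\ge 1-2\eps$ for $\eps\in(0,1/2)$. Running the whole procedure with $\eps/2$ in place of $\eps$ yields the claimed guarantee $\tfrac23(1-\eps)\alpha(\II)\le\hat\alpha\le\alpha(\II)$ with probability at least $2/3$, and the space is three times that of Lemma~\ref{le:alpha} with parameter $\eps/2$, i.e. still $O(\eps^{-2}\log(1/\eps)+\log n)$. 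There is no serious obstacle here; the only points needing a little care are that the per-instance failure probability in Lemma~\ref{le:alpha} is small enough ($1/9$) for the union bound over the three shifts to leave probability $2/3$, and that the division by $1+\eps$ is what converts the two-sided $(1\pm\eps)$ estimate into a one-sided estimate with upper bound exactly $\alpha(\II)$.
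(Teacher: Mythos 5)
Your proposal is correct and follows essentially the same route as the paper's proof: run the three instances of Lemma~\ref{le:alpha}, take a union bound over the failure probabilities $1/9$, combine with Lemma~\ref{le:32}, divide the maximum by $1+\eps$ to make the estimate one-sided, and rescale $\eps$. The only (welcome) addition is that you spell out why $\max_a\hat\alpha_a\le(1+\eps)\alpha(\II)$ via $\II_a\subseteq\II$, which the paper leaves implicit.
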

\begin{proof}
	For each $a=0,1,2$ we compute the estimate $\hat\alpha_a$ to $\alpha(\II_a)$
	with the algorithm described in Lemma~\ref{le:alpha}.
	We then have 
	\[
	\Pr \left[ \bigwedge_{a=0,1,2} \Bigl[ |\alpha(\II_a)- \hat\alpha_a| \le \eps \cdot \alpha(\II_a) \Bigr]\right]
		~\ge~ \frac{6}{9}.
	\] 
	When the event occurs, it follows 
	by Lemma~\ref{le:32} that
	\[
		\tfrac 23 (1-\eps)\cdot \alpha(\II) \le 
					\max \{ \hat\alpha_0, \hat\alpha_1, \hat\alpha_2 \} \le 
					(1+\eps) \alpha(\II).
	\] 
	Therefore, 
	using that $(1-\eps)/(1+\eps)\ge 1-2\eps$ for all $\eps\in(0,1/2)$,
	rescaling $\eps$ by $1/2$,
	and returning $\hat\alpha =\max \{ \hat\alpha_0, \hat\alpha_1, \hat\alpha_2 \}/(1+\eps)$, 
	the result is achieved.\qedhere
\end{proof}

%%%%%%%%%%%%%%%%%%%%%%%%%%%%%%%%%%%%%%%%%%%%%%%%%%%%%%%%%%%%%%%%%%%%%%%%%%%%%%%%%%%%%%%%%%%%%%55
\section{Lower bounds}
\label{sec:lower}

Emek, Halld{\'o}rsson and Ros{\'e}n~\cite{EmekHR12} showed that any streaming algorithm
for the interval selection problem cannot achieve an approximation ratio of $r$, for any
constant $r<2$. They also show that, for same-size intervals, one cannot
obtain an approximation ratio below $3/2$. We are going to show that similar
inapproximability results hold for estimating $\alpha(\II)$.

The lower bound of 
Emek et al.~uses the coordinates of the endpoints to recover 
information about a permutation.
That is, given a solution to the interval selection problem, they can use the
endpoints of the intervals to recover information. Thus, such reduction cannot
be adapted to the estimation of $\alpha(\II)$, since we do not require to 
return intervals. Nevertheless, there are certain similarities between their construction
and ours, especially for same-length intervals.

Consider the problem \Index. The input to \Index\ is a pair $(S,i)\in \{0,1\}^n\times [n]$
and the output, denoted by \Index$(S,i)$,
is the $i$-th bit of $S$. One can think of $S$ as a subset of $[n]$, and
then \Index$(S,i)$ is asking whether element $i$ is in the subset or not.

The one-way communication complexity of \Index\ is well understood. In this
scenario, Alice has $S$ and Bob has $i$. Alice sends a message to Bob and then Bob
has to compute \Index$(S,i)$. The key question is how long should be the message in
the worst case so
that Bob can compute \Index$(S,i)$ correctly with probability greater than $1/2$, say, at least $2/3$. 
(Attaining probability $1/2$ is of course trivial.) 
It is known that, to achieve this, the message of Alice has $\Omega(n)$ bits in the worst case.
See~\cite{v004a006} for a short proof and~\cite{CC} for a comprehensive treatment.

Given an input $(S,i)$ for \Index, we can build a data stream of same-length
intervals $\II$ with the property that $\alpha(\II)\in \{ 2,3\}$ and  
\Index$(S,i)=1$ if and only if $\alpha(\II)=3$. Moreover, the first
part of the stream depends only on $S$ and the second part on $i$. Thus,
the state of the memory at the end of the first part
can be interpreted as a message that Alice sends to Bob. This implies the
following lower bound.

\begin{figure}
	\includegraphics[page=1,width=\textwidth]{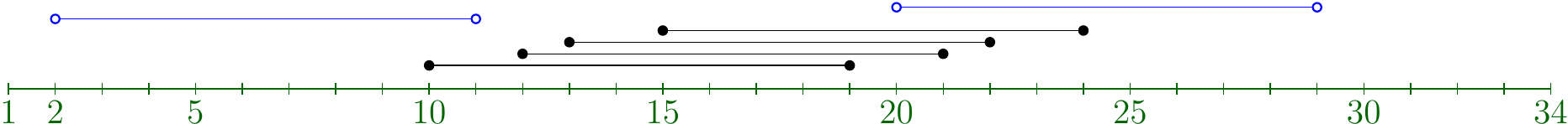}
    \caption{Example showing $\sigma(S,i)$ for $n=7$, $S=\{1,3,4,6\}$,
    	$L=9$, and $i=2$ in the proof of Theorem~\ref{thm:lowerbound1}.
    	The intervals are sorted from bottom to top in the order they appear 
        in the data stream.
        The empty dots represent endpoints that are not included in the interval, while
        the full dots represent endpoints included in the interval.}
    \label{fig:lowerbound1}
\end{figure}

\begin{theorem}
\label{thm:lowerbound1}
	Let $c>0$ be an arbitrary constant.
	Consider the problem of estimating $\alpha(\II)$ for sets of same-length intervals $\II$
	with endpoints in $[n]$.
	In the data streaming model,
	there is no algorithm that uses $o(n)$ bits of memory and computes
	an estimate $\hat \alpha$ such that 
	\begin{equation}\label{eq3}
		\Pr\left[ \left(\frac 23 + c\right) \alpha(\II) \le \hat\alpha \le \alpha(\II) \right] ~\ge~ \frac 23.
	\end{equation}
\end{theorem}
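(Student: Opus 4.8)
The plan is to reduce from the one-way randomized communication complexity of \Index. Recall that any one-way protocol in which Bob outputs \Index$(S,i)$ correctly with probability at least $2/3$ requires Alice to send $\Omega(n)$ bits, even when Alice and Bob share public randomness~\cite{v004a006,CC}. So it suffices to build, from an instance $(S,i)$ of \Index\ with $S\in\{0,1\}^n$, a stream $\sigma(S,i)$ of equal-length intervals with endpoints in a range $[N]$, $N=\Theta(n)$, whose prefix $\sigma_A(S)$ depends only on $S$, whose suffix $\sigma_B(i)$ depends only on $i$, and such that $\alpha(\sigma(S,i))=3$ when $\text{\Index}(S,i)=1$ and $\alpha(\sigma(S,i))=2$ otherwise. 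Given such a gadget and a hypothetical streaming algorithm using $s$ bits and satisfying~\eqref{eq3}, Alice runs the algorithm on $\sigma_A(S)$, sends the $s$-bit memory state to Bob, who resumes it on $\sigma_B(i)$, obtains $\hat\alpha$, and answers ``$1$'' iff $\hat\alpha>2$. Since $\alpha(\II)\in\{2,3\}$, inequality~\eqref{eq3} yields $\hat\alpha\ge 3\bigl(\tfrac23+c\bigr)=2+3c>2$ when $\text{\Index}(S,i)=1$ and $\hat\alpha\le 2$ when $\text{\Index}(S,i)=0$, so Bob is correct with probability at least $2/3$; hence $s=\Omega(N)=\Omega(n)$, contradicting $s=o(n)$.

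The construction, illustrated in Figure~\ref{fig:lowerbound1}, uses intervals of common length $L=n+2$, and it is convenient to take half-open intervals $[a,b)$. For each $j$ with $S_j=1$, the prefix $\sigma_A(S)$ contains the interval $A_j=[j,\,j+L)$; since any two indices in $[n]$ differ by less than $L$, the intervals $\{A_j: S_j=1\}$ pairwise intersect. The suffix $\sigma_B(i)$ consists of exactly two intervals: a ``left anchor'' $\ell_i=[i-L,\,i)$ and a ``right anchor'' $\rho_i=[i+L,\,i+2L)$. One checks directly from the choice of $L$ that $\ell_i$ intersects $A_j$ precisely for $j<i$, that $\rho_i$ intersects $A_j$ precisely for $j>i$, that $\ell_i$ and $\rho_i$ are disjoint, and that $A_i$ (if present) is disjoint from both anchors. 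All coordinates used lie in a window of length $O(n)$, so after translation the endpoints lie in $[N]$ with $N=\Theta(n)$, and the prefix/suffix split is exactly as required.

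It remains to pin down $\alpha(\sigma(S,i))$. The pair $\{\ell_i,\rho_i\}$ is always independent, so $\alpha\ge 2$. Conversely, because the $A_j$'s pairwise intersect, every independent subset of $\sigma(S,i)$ contains at most one $A_j$, hence has size at most $3$; and it has size exactly $3$ only if it equals $\{\ell_i,A_j,\rho_i\}$ with $A_j$ disjoint from both anchors, which by the intersection pattern above forces $j=i$ and therefore $S_i=1$. In the other direction, if $S_i=1$ then $\{\ell_i,A_i,\rho_i\}$ is independent, giving $\alpha=3$. Thus $\alpha(\sigma(S,i))=3$ iff $\text{\Index}(S,i)=1$, which completes the reduction.

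The one genuine difficulty is the gadget design: with \emph{same-length} intervals one must simultaneously (a) keep Alice's intervals pairwise intersecting, so that her part of the stream contributes at most one interval to any independent set and $\alpha$ remains bounded by a constant, and (b) arrange Bob's two index-dependent anchors so that together they exclude every slot $j\neq i$ while leaving slot $i$ free. Once the intersection pattern of $\ell_i$ and $\rho_i$ with the family $\{A_j\}$ is fixed — which is where the slack in the length $L=n+2$ and the choice of half-open endpoints are used — the simulation and the probabilistic bookkeeping are entirely routine.
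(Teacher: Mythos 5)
Your proposal is correct and follows essentially the same route as the paper: a reduction from the one-way communication complexity of \Index\ using a stream of length-$L$ intervals whose prefix encodes $S$ as a pairwise-intersecting family and whose two-interval suffix encodes $i$, forcing $\alpha\in\{2,3\}$ with $\alpha=3$ iff \Index$(S,i)=1$. Your gadget is a translate of the paper's (the paper uses $[L+j,2L+j]$ with open anchors $(i,L+i)$ and $(2L+i,3L+i)$, you use half-open intervals shifted by $L$), and the simulation and probability analysis are identical.
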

\begin{proof}
	For simplicity, we use intervals with endpoints in $[3n]$ and mix closed and open intervals in the proof.
    
	Given an input $(S,i)$ for \Index, consider the following stream of intervals. 
    We set $L$ to some large enough value; for example $L=n+2$ will be enough.
    Let $\sigma_1(S)$ be a stream that, for each $j\in S$, contains the closed interval $[L+j,2L+j]$.
    Let $\sigma_2(i)$ be the length-two stream with open intervals $(i,L+i)$ and $(2L+i,3L+i)$.
    Finally, let $\sigma(S,i)$ be the concatenation of $\sigma_1(S)$ and $\sigma_2(i)$.
    See Figure~\ref{fig:lowerbound1} for an example.
    Let $\II$ be the intervals in $\sigma(S,i)$.
    It is straightforward to see that $\alpha(\II)$ is $2$ or $3$.
    Moreover, $\alpha(\II)=3$ if and only if \Index$(S,i)=1$.

	Assume, for the sake of contradiction, that we have an algorithm in the data streaming
	model that uses $o(n)$ bits of space and computes a value $\hat\alpha$ 
	that satisfies equation~\eqref{eq3}.
	Then, Alice and Bob can solve \Index$(S,i)$ using $o(n)$ bits, as follows.
	
	Alice simulates the data stream algorithm on $\sigma_1(S)$ and sends to Bob
	a message encoding the state of the memory at the end of processing $\sigma_1(S)$. 
	The message of Alice has $o(n)$ bits.
	Then, Bob continues the simulation on the last two items of $\sigma(S,i)$, 
    that is, $\sigma_2(i)$. 
	Bob has correctly computed the output of the algorithm on $\sigma(S,i)$,
	and therefore obtains $\hat \alpha$ so that equation~\eqref{eq3} is satisfied.
	If $\hat\alpha > 2$, then Bob returns the bit $\hat\beta = 1$.
	If $\hat\alpha \le 2$, then Bob returns $\hat\beta=0$.
    This finishes the description of the protocol.
    
	Consider the case when \Index$(S,i)=1$. In that case, $\alpha(\II)=3$.
    With probability at least $2/3$, the value $\hat\alpha$ computed
    satisfies $\hat\alpha\ge (\tfrac 23 + c) \alpha(\II) = 2+3c>2$, and therefore
    \begin{eqnarray*}
    	\Pr\left[ \hat\beta=1\mid \text{\Index$(S,i)=1$}\right] 
    	& = & \Pr\left[ \hat\alpha>2\mid \text{\Index$(S,i)=1$}\right]\\
        & \ge & \Pr\left[ \hat\alpha\ge \left(\tfrac 23 + c\right) \alpha(\II)\mid \text{\Index$(S,i)=1$}\right]\\
        & \ge & 2/3.
    \end{eqnarray*}
	When \Index$(S,i)=0$, then $\alpha(\II)=2$, and,
    with probability at least $2/3$, the value $\hat\alpha$ computed
    satisfies $\hat\alpha\le \alpha(\II) = 2$. Therefore,
    \begin{eqnarray*}
    	\Pr\left[ \hat\beta=0\mid \text{\Index$(S,i)=0$}\right] 
    	& = & \Pr\left[ \hat\alpha \le 2\mid \text{\Index$(S,i)=0$}\right]\\
    	& \ge & \Pr\left[ \hat\alpha \le \alpha(\II)\mid \text{\Index$(S,i)=0$}\right]\\
    	& \ge & 2/3.
    \end{eqnarray*}
	We conclude that 
    \[
    	\Pr\left[ \hat\beta =\text{\Index}(S,i)\right] ~\ge~ 2/3.
    \]
	Since Bob computes $\hat\beta$ after a message from Alice with $o(n)$ bits,
    this contradicts the lower bound for \Index.
\end{proof}

For intervals of different sizes, we can use an alternative construction
with the property that $\alpha(\II)$ is either $k+1$ or $2k+1$. This means
that we cannot get an approximation ratio arbitrarily close to $2$.

\begin{figure}
	\includegraphics[page=2,width=\textwidth]{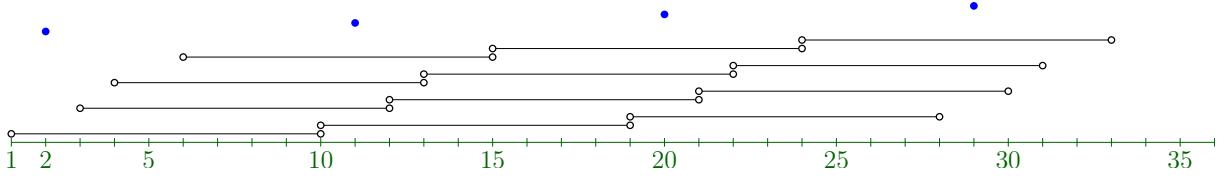}
    \caption{Example showing $\sigma(S,i)$ for $n=7$, $S=\{1,3,4,6\}$,
    	$L=9$, $k=3$, and $i=2$ in the proof of Theorem~\ref{thm:lowerbound2}.
    	The intervals are sorted from bottom to top in the order they appear 
        in the data stream.
        The empty dots represent endpoints that are not included in the interval, while
        the full dots represent endpoints included in the interval.}
    \label{fig:lowerbound2}
\end{figure}

\begin{theorem}
\label{thm:lowerbound2}
	Let $c>0$ be an arbitrary constant.
	Consider the problem of estimating $\alpha(\II)$ for sets of intervals $\II$
	with endpoints in $[n]$.
	In the data streaming model,
	there is no algorithm that uses $o(n)$ bits of memory and computes
	an estimate $\hat \alpha$ such that 
	\[
		\Pr\left[ \left(\frac 12 + c\right) \alpha(\II) \le \hat\alpha \le \alpha(\II) \right] ~\ge~ \frac 23.
	\]
\end{theorem}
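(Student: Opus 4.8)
The plan is to mimic the reduction from \Index\ used in Theorem~\ref{thm:lowerbound1}, but to replace the two-valued instance $\alpha(\II)\in\{2,3\}$ by one with $\alpha(\II)\in\{k+1,2k+1\}$ for a constant $k=k(c)$ chosen so that $(\tfrac12+c)(2k+1)>k+1$; since this inequality is equivalent to $2ck+c>\tfrac12$, the choice $k=\lceil 1/(4c)\rceil$ works. Given an input $(S,i)$ for \Index\ on $\{0,1\}^n\times[n]$, I will build a stream $\sigma(S,i)=\sigma_1(S)\cdot\sigma_2(i)$ of intervals of different sizes, with endpoints in $[C n]$ for a constant $C=C(c)$ (this mild enlargement of the coordinate range, as in the proof of Theorem~\ref{thm:lowerbound1}, costs nothing asymptotically), such that $\alpha(\II)=2k+1$ if $\Index(S,i)=1$ and $\alpha(\II)=k+1$ if $\Index(S,i)=0$, where $\sigma_1$ depends only on $S$ and $\sigma_2$ consists of only $O(k)=O(1)$ intervals and depends only on $i$.

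The construction uses $k$ pairwise disjoint ``gadget regions'' $I_1,\dots,I_k$ placed consecutively on the line, each of width $\Theta(n)$. In region $I_m$ the stream $\sigma_1(S)$ places, for every $j\in S$, a closed interval $B_{m,j}$ of common length $\approx n$ obtained by sliding a window by unit steps; these intervals pairwise intersect, so any independent set uses at most one per region. Then $\sigma_2(i)$ places $k+1$ pairwise disjoint ``connector'' intervals $D_0,D_1,\dots,D_k$, where $D_{m-1}$ spans the gap between the right endpoint of $B_{m-1,i}$ and the left endpoint of $B_{m,i}$ (and $D_0$, $D_k$ run off to the left and right ends of the used range). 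The connectors are positioned so that $B_{m,i}$ is disjoint from both $D_{m-1}$ and $D_m$, while every $B_{m,j}$ with $j\neq i$ meets $D_{m-1}$ (if $j<i$) or $D_m$ (if $j>i$) --- exactly the left/right pinning trick that isolates $B_i$ in Theorem~\ref{thm:lowerbound1}. As there, we mix open, closed and half-open intervals so that ``touching'' endpoints behave as required. Consequently, if $i\in S$ then $D_0,B_{1,i},D_1,B_{2,i},D_2,\dots,B_{k,i},D_k$ is an independent set of size $2k+1$, and a short exchange argument shows nothing larger exists (dropping a connector to free a region never increases the total, since each region still contributes at most one interval); if $i\notin S$ then no $B_{m,i}$ exists, using any other $B_{m,j}$ costs a connector, so $\alpha(\II)=k+1$.

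With this gadget the reduction is routine. Suppose there were an $o(n)$-bit streaming algorithm achieving the claimed guarantee. Alice runs it on $\sigma_1(S)$ and sends the memory contents --- $o(n)$ bits, since the coordinate range is $O(n)$ --- to Bob, who resumes the computation on the $O(1)$ intervals of $\sigma_2(i)$, obtains $\hat\alpha$, and outputs $1$ if $\hat\alpha>k+1$ and $0$ otherwise. If $\Index(S,i)=1$ then $\alpha(\II)=2k+1$ and with probability at least $2/3$ we get $\hat\alpha\ge(\tfrac12+c)(2k+1)>k+1$, so Bob outputs $1$; if $\Index(S,i)=0$ then $\alpha(\II)=k+1$ and with probability at least $2/3$ we get $\hat\alpha\le\alpha(\II)=k+1$, so Bob outputs $0$. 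Hence Bob computes $\Index(S,i)$ correctly with probability at least $2/3$ from an $o(n)$-bit message, contradicting the $\Omega(n)$ one-way communication lower bound for \Index.

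The main obstacle I anticipate is the design and verification of the gadget: one must fix the widths of the $I_m$, the length and offsets of the windows $B_{m,j}$, and the endpoints of the $D_m$ so that simultaneously (a) the $B_{m,j}$ within a region genuinely pairwise intersect, (b) the connectors are genuinely pairwise disjoint and never stray into a neighbouring region, (c) $B_{m,i}$ fits exactly into its gap while every $B_{m,j}$ with $j\neq i$ is blocked, and then (d) establish the matching upper bounds $\alpha(\II)\le 2k+1$ and $\alpha(\II)\le k+1$ by the exchange argument sketched above. Everything after that --- the choice of $k$ in terms of $c$ and the communication-complexity wrap-up --- parallels the proof of Theorem~\ref{thm:lowerbound1} essentially verbatim.
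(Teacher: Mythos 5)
Your overall strategy is the same as the paper's: pick a constant $k$ with $(\tfrac12+c)(2k+1)>k+1$, build a stream whose optimum is $2k+1$ or $k+1$ according to $\Index(S,i)$, and wrap it in the standard one-way \Index\ reduction. That part is fine. The problem is in the gadget itself, precisely at the point you flagged as the anticipated obstacle: your NO-case bound $\alpha(\II)=k+1$ is false for the construction as described. In your gadget the blocks $B_{m,j}$ are confined to pairwise disjoint regions $I_1,\dots,I_k$, and a block with $j>i$ leans only into $D_m$ while a block in the next region with $j'<i$ leans only into $D_{m-1}$ of that region, i.e.\ into the \emph{same} connector $D_m$. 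Those two blocks live in disjoint regions, so they are disjoint from each other, and each of them meets only $D_m$ and no other connector. Hence when $i\notin S$ the set $\{D_0,\,B_{1,j_1},\,B_{2,j_2},\,D_2,D_3,\dots,D_k\}$ with $j_1>i>j_2$ is independent of size $k+2$; iterating the trick on disjoint pairs of adjacent regions yields an independent set of size about $\lfloor 3k/2\rfloor+1$. Your exchange argument (``dropping a connector to free a region never increases the total'') overlooks that dropping one connector can free \emph{two} regions. Since $(\tfrac12+c)(2k+1)\approx k$ for small $c$ while the NO-case optimum is $\approx 3k/2$, an estimator could output a value near $3k/2$ in both cases and satisfy the guarantee, so the reduction no longer distinguishes the two cases.

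The paper avoids this by, in effect, swapping the roles of the two interval families and making the $S$-dependent intervals long: $\sigma_2(i)$ consists of $k+1$ zero-length closed intervals $[i+mL,i+mL]$, and $\sigma_1(S)$ consists, for each $j\in S$, of the $k$ open intervals $(j+mL,\,j+(m+1)L)$ of length $L=n+2>n$ spanning an entire gap between consecutive marks. With $j\neq i$ such an interval still covers exactly one mark (the left one if $j<i$, the right one if $j>i$), but now two intervals in adjacent columns that would cover the same mark necessarily overlap each other (their shifts satisfy $j_1>i>j_2$, so $L+j_1>L+j_2$), so at most one of them can be used and the blocked marks are distinct; this is what makes the NO-case optimum exactly $k+1$. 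Your gadget can be repaired by the same device --- let the $B_{m,j}$ span the whole distance between consecutive connectors' anchor points rather than sit inside disjoint regions --- at which point it becomes essentially the paper's construction. The remainder of your argument (choice of $k$, endpoints in $O(kn)=O(n)$, and the Alice/Bob protocol with threshold $k+1$) matches the paper and is correct.
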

\begin{proof}
	Let $k$ be a constant larger than $1/c$. 
    For simplicity, we will use
    intervals with endpoints in $[2kn]$.
    
	Given an input $(S,i)$ for \Index, consider the following stream of intervals. 
    We set $L$ to some large enough value; for example $L=n+2$ will be enough.
    Let $\sigma_1(S)$ be a stream that, for each $j\in S$, contains 
    the $k$ open intervals $(j,L+j), (j+L,j+2L),\dots, (j+(k-1)L,j+kL)$.
    Thus $\sigma_1(S)$ has exactly $k|S|$ intervals.
    Let $\sigma_2(i)$ be the stream with $k+1$ zero-length closed intervals 
    $[i,i], [i+L,i+L],\dots, [i+kL,i+kL]$.
    Finally, let $\sigma(S,i)$ be the concatenation of $\sigma_1(S)$ and $\sigma_2(i)$.
    See Figure~\ref{fig:lowerbound2} for an example.
    Let $\II$ be the intervals in $\sigma(S,i)$.
    It is straightforward to see that $\alpha(\II)$ is $k+1$ or $2k+1$:
    The greedy left-to-right optimum contains either all the intervals of $\sigma_2(i)$
    or those together with $k$ intervals from $\sigma_1(S)$.
    This means that $\alpha(\II)=2k+1$ if and only if \Index$(S,i)=1$.

	We use a protocol similar to that of the proof of Theorem~\ref{thm:lowerbound1}:
	Alice simulates the data stream algorithm on $\sigma_1(S)$, Bob receives
	the data message from Alice and continues the algorithm on $\sigma_2(S)$, and Bob returns
	bit $\hat\beta=1$ if an only if $\hat\alpha> k+1$.
	With the same argument, and using the fact 
	that $(\tfrac{1}{2}+c)(2k+1) > k+1$ by the choice of $k$, we can prove 
    that using $o(kn)=o(n)$ bits of memory one cannot distinguish, 
    with probability at least $2/3$, whether $\alpha(\II)=k+1$
    or $\alpha(\II)=2k+1$. The result follows.
\end{proof}
%%%%%%%%%%%%%%%%%%%%%%%%%%%%%%%%%%%%%%%%%%%%%%%%%%%%%%%%%%%%%%%%%%%%%%%%%%%%%%%%%%%%%%%%%%%%%%%55
%\section{Conclusions}

\small
\bibliographystyle{abuser}
\bibliography{biblio}

\end{document}

\end{document}